\newcommand{\comment}[3]{\ifthenelse{\boolean{showComments}}{%
\begin{mdframed}
\noindent\textbf{Reviewer comment:} {(#1)}
\begin{quotation}
{#2}
\end{quotation}

\hrule

\vspace{8pt}

\noindent\textbf{Authors comment:}

{#3}
\end{mdframed}
}%
{}}
\newcommand{\rulelbl}[1]{\small{\textrm{(#1)}}}
\newcommand{\fun}[1]{\small{\textsf{#1}}}
\newcommand{\PO}{{\mathscr{P}\!\mathscr{O}}}
\newcommand{\LE}{{\mathscr{L}\!\mathscr{E}}}
\tikzset{ordedge/.style={draw,decoration={
    markings,
    mark=at position 0.5 with {\arrow[very thick]{>}}},
    postaction={decorate}
    }}
\tikzset{my_ordedge/.style={draw,decoration={
    markings,
    mark=at position 0.4 with {\arrow[very thick]{>}}},
    postaction={decorate}
    }}
\newcolumntype{M}{>{\centering\arraybackslash}m{\dimexpr.21\linewidth-2\tabcolsep}}
\newcolumntype{L}{>{\centering\arraybackslash}m{\dimexpr.4\linewidth-2\tabcolsep}}
\newcolumntype{H}{>{\centering\arraybackslash}m{\dimexpr.13\linewidth-2\tabcolsep}}
\newcolumntype{S}{>{\centering\arraybackslash}m{\dimexpr.1\linewidth-2\tabcolsep}}
\newcolumntype{A}{>{\centering\arraybackslash}m{\dimexpr.08\linewidth-2\tabcolsep}}
\newcommand{\dd}{\mathrm{d}}
\newcommand{\Unif}{\textsc{Uniform}}
\newcommand{\entails}[1]{\vdash_{\scriptscriptstyle {#1}}}
\newtheorem{theorem}{Theorem}
\newtheorem{corollary}[theorem]{Corollary}
\theoremstyle{definition}
\newtheorem{definition}[theorem]{Definition}
\newtheorem{example}[theorem]{Example}
\begin{document}
\title[The Combinatorics of Barrier Synchronization]{The Combinatorics of Barrier Synchronization$^\dag$\footnote{
       $^\dag$ This research was partially supported by the ANR MetACOnc project ANR-15-CE40-0014.}}

\author{Olivier Bodini}
\address{Laboratoire d'Informatique de Paris-Nord,
    CNRS UMR 7030 - Institut Galil\'ee - Universit\'e Paris-Nord,
    99, avenue Jean-Baptiste Cl\'ement, 93430 Villetaneuse, France.} 
\email{Olivier.Bodini@lipn.univ-paris13.fr}

\author{Matthieu Dien}
\address{Université de Caen -- GREYC -- CNRS UMR 6072.} 
\email{Matthieu.Dien@unicaen.fr}

\author{Antoine Genitrini}
\address{Sorbonne Universit\'e,	CNRS,
    Laboratoire d'Informatique de Paris 6 -LIP6- UMR 7606, F-75005 Paris, France.}
\email{Antoine.Genitrini@lip6.fr}

\author{Fr\'ed\'eric Peschanski}
\address{Sorbonne Universit\'e,	CNRS,
    Laboratoire d'Informatique de Paris 6 -LIP6- UMR 7606, F-75005 Paris, France.}
\email{Frederic.Peschanski@lip6.fr}

\maketitle              
\begin{abstract}
    In this paper we study the notion of synchronization from the point of
    view of combinatorics. As a first step, we address the quantitative
    problem of counting the number of executions of simple processes
    interacting with synchronization barriers. We
    elaborate a systematic decomposition of processes that produces a
    symbolic integral formula to solve the problem. Based on this
    procedure, we develop a generic algorithm to generate process
    executions uniformly at random. For some interesting sub-classes of processes
    we propose very efficient counting and random sampling algorithms.
    All these algorithms have one important characteristic in common: they work on the
    control graph of processes and thus do not require the explicit
    construction of the state-space.
    
    \keywords{Barrier synchronization \and Combinatorics \and Uniform random generation.}
\end{abstract}

\comment{Review 3}{Perhaps the authors have been too ambitious in their aim to include counting, subclasses, and an experimental study, all in one conference contribution together with (sketches of) proofs.}%
{We agree that the breadth of the paper is rather large but we think it is still a whole contribution and decided not to remove a full section. However we were able to save some space in less important parts. Hence we added further explanations, especially a detailed description of Algorithm 1 (cf. the correponding comment box).}

\section{Introduction}

The objective of our (rather long-term) research project is to study
the \emph{combinatorics} of concurrent processes.  Because the
mathematical toolbox of combinatorics imposes strong constraints on
what can be modeled, we study \emph{process calculi} with a very restricted
focus. For example in ~\cite{BGP16} the processes we study can only perform
atomic actions and fork child processes, and in~\cite{BGP13} we enrich
this primitive language with \emph{non-determinism}.  In the present
paper, our objective is to isolate another fundamental ``feature'' of
concurrent processes: \emph{synchronization}. For this, we introduce
a simple process calculus whose only non-trivial concurrency feature
is a principle of \emph{barrier synchronization}.
This is here understood intuitively as the single point of control where multiple
processes have to ``meet'' before continuing. This is one of the important
building blocks for concurrent and parallel systems~\cite{barrier:synchro:88}.

\comment{Review 3}{It is strange that nothing is said or explained in the Introduction (including the Related Work paragraph) about "barrier synchronization" - so prominent in the title - and why it would be interesting. Its single occurrence, is in the description of the outline of the paper. Overall, the class of processes interested seems to be of limited interest.
    - Explain the abbreviation B I T S (also) at its first occurrence in the Introduction.}%
{We added a paragraph about barrier synchronization and found a reference which toroughly investigate the principle. 
    But of course we also appeal here to the intuitive meaning of the notion in the concurrency folklore. For the ``limited interest'' we would more argue about ``limited expressivity''. The conclusion provides further discussion but the general intent is to study a specific notion in isolation, and then as a futur work to ``connect the dots''. We hope that the interest is still somewhat ensured but it is not for us to decide. We also added a sentence to explain the intent of the BITS rules.}

Combinatorics is about ``counting things'', and what we propose to count in our study is the
number of executions of processes wrt. their ``syntactic size''. This
is a symptom of the so-called ``combinatorial explosion'', a defining
characteristic of concurrency. As a first step, we show that counting executions of concurrent
processes is a difficult problem, even in the case of our calculus with limited expressivity. Thus, one important goal of our study is to
investigate interesting sub-classes for which the problem becomes
``less difficult''. To that end, we elaborate in this paper a
systematic decomposition of arbitrary processes, based on only four
rules: (B)ottom, (I)ntermediate, (T)op and (S)plit. Each rule explains
how to ``remove'' one node from the control graph of a process while
taking into account its contribution in the number of possible executions.
Indeed, one main feature of this BITS-decomposition is that it produces a symbolic integral formula to
solve the counting problem. Based on this procedure, we develop a
generic algorithm to generate process executions uniformly at
random. Since the algorithm is working on the control graph of
processes, it provides a way to statistically analyze processes
without constructing their state-space explicitly. In the worst case,
the algorithm cannot of course overcome the hardness of the problem 
it solves. However, depending on the
rules allowed during the decomposition, and also on the strategy
adopted, one can isolate interesting sub-classes wrt. the counting and
random sampling problem. We identify well-known
``structural'' sub-classes such as \emph{fork-join parallelism}~\cite{valiant:bsp94} and
\emph{asynchronous processes with promises}~\cite{LiskovS88}.
For some of these
sub-classes we develop dedicated and efficient counting and
random sampling algorithms. A large sub-class that we find
particularly interesting is what we call the ``BIT-decomposable''
processes, i.e. only allowing the three rules (B), (I) and (T) in the
decomposition.  The counting formula we obtain for such processes is
of a linear size (in the number of atomic actions in the processes, or
equivalently in the number of vertices in their control graph). We also
discuss informally the typical shape of ``BIT-free'' processes.

The outline of the paper is as follows. In Section~\ref{sec:lang} we introduce a minimalist calculus
of barrier synchronization. We show that the control graphs of processes expressed in this language are isomorphic
to arbitrary partially ordered sets (Posets) of atomic actions. From this we deduce our rather ``negative'' starting
point: counting executions in this simple language is intractable in the general case. In Section~\ref{sec:bit} we define
the BITS-decomposition, and we use it in Section~\ref{sec:randgen} to design a generic uniform random sampler.
In Section~\ref{sec:subclasses} we discuss various sub-classes of processes related to the proposed decomposition, and for some of them we explain how the
counting and random sampling problem can be solved efficiently. In Section~\ref{sec:bench} we propose an experimental study
of the algorithm toolbox discussed in the paper.

Note that some technical complement and proof details are deferred to an external ``companion'' document. Moreover we provide the full
source code developed in the realm of this work, as well as the benchmark scripts. All these complement informations are
available online\footnote{cf.~\url{https://gitlab.com/ParComb/combinatorics-barrier-synchro.git}}.

\subsection*{Related  work}

Our study intermixes viewpoints from concurrency theory, order-theory as well as combinatorics (especially enumerative combinatorics and random sampling). The \emph{heaps combinatorics} (studied in e.g.~\cite{trace:randgen:mfcs15}) provides a complementary interpretation of concurrent systems. One major difference is that this concerns ``true concurrent'' processes based on the trace monoid, while we rely on the alternative \emph{interleaving semantics}. A related uniform random sampler for networks of automata is presented in~\cite{automata:network:randge:concur17}. Synchronization is interpreted on words using a notion of ``shared letters''. This is very different from the ``structural'' interpretation as joins in the control graph of processes. For the generation procedure \cite{trace:randgen:mfcs15} requires the construction of a ``product automaton'', whose size grows exponentially in the number of ``parallel'' automata. By comparison, all the algorithms we develop are based on the control graph, i.e. the space requirement remains polynomial (unlike, of course, the time complexity in some cases). Thus, we can interpret this as a space-time trade-of between the two approaches. A related approach is that of investigating the combinatorics of \emph{lassos}, which is connected to the observation of state spaces through linear temporal properties. A uniform random sampler for lassos is proposed in~\cite{unif:mcmc:fase11}. The generation procedure takes place \emph{within} the constructed state-space, whereas the techniques we develop do not require this explicit construction. However lassos represent infinite runs whereas for now we only handle finite (or finite prefixes) of executions.

A coupling from the past (CFTP) procedure for the uniform random generation of linear extensions is described, with relatively sparse details, in~\cite{huber:dm06}. The approach we propose, based on the continuous embedding of Posets into the hypercube, is quite complementary. A similar idea is used in~\cite{BaMaWa18}
for the enumeration of Young tableaux using what is there called the \emph{density method}.  The paper~\cite{MCMC:modelcheck:tacas05} advocates the uniform random generation of executions
as an important building block for \emph{statistical model-checking}. A similar discussion is proposed
in~\cite{Sen:RandomTestingConcurrency:2007} for \emph{random testing}. The \emph{leitmotiv} in both cases is that
generating execution paths \emph{without} any bias is difficult. Hence a uniform random sampler is very likely to produce interesting and complementary tests, if comparing to other test generation strategies.

Our work can also be seen as a continuation of the \emph{algorithm and order} studies~\cite{algo:order:88} orchestrated by Ivan Rival in late 1980's only with powerful new tools available in the modern combinatorics toolbox. 

\section{Barrier synchronization processes}
\label{sec:lang}

The starting point of our study is the small process calculus described below.
\begin{definition}[Syntax of barrier synchronization processes]
We consider countably infinite sets $\mathcal{A}$ of (abstract) atomic actions, and $\mathcal{B}$ of barrier names. 
The set $\mathcal{P}$ of processes is defined by the following grammar:

  \begin{tabular}{lll}
$P,Q$ ::= & $0$ & (termination)\\
                                      & $\mid$ $\alpha.P$ & (atomic action and prefixing) \\
                                      & $\mid$  $\langle B \rangle P$ & (synchronization) \\
                                     & $\mid$  $\nu(B)P$ & (barrier and scope) \\
                                      & $\mid$  $P \parallel Q$ & (parallel)  \\
  \end{tabular} 
\end{definition}
\comment{Review 2/3}{(Review 2) The use of figure for defining formally concepts and functions is rather disappointing.

(Review 3)  I do not like having definitions in Tables or Figures. Incorporating them in the text (preferably with a definition number) makes it easier to find them back}{Most definitions have been removed from tables/figures and put into the text. One exception is the description of the BITS rules that are given in graphical terms (hence we think a Figure is adequate).}
The language has very few constructors and is purposely of limited expressivity. Processes in this language can only perform atomic actions, fork child processes and interact using a basic principle of \emph{synchronization barrier}. A very basic process is the following one: $$\nu(B)~ \left [ \mathsf{a_1}.\langle B \rangle ~\mathsf{a_2}.0 \parallel \langle B \rangle \mathsf{b_1}.0 \parallel \mathsf{c_1}.\langle B \rangle~0 \right ]$$

This process can initially perform the actions $\mathsf{a_1}$ and $\mathsf{c_1}$ in an arbitrary order. We then reach the state in which all the processes agrees to synchronize on $B$:
$$\nu(B)~\left [ \langle B \rangle ~\mathsf{a_2}.0 \parallel \langle B \rangle \mathsf{b_1}.0 \parallel \langle B \rangle~0 \right ]$$
The possible next transitions are:
$\xrightarrow{\mathsf{a_2}} \mathsf{b_1}.0 \xrightarrow{\mathsf{b_1}} 0,~\text{alternatively} \xrightarrow{\mathsf{b_1}} \mathsf{a_2}.0 \xrightarrow{\mathsf{a_2}} 0$

\noindent In the resulting states, the barrier $B$ has been ``consumed''.

The operational semantics below characterize processes
transitions of the form $P \xrightarrow{\alpha} P'$ in which $P$ can perform action
$\alpha$ to reach its (direct) derivative $P'$.
\begin{definition}[Operational semantics\label{def:sem}]
\vspace{-2ex}
\begin{center}
\begin{tabular}{c}
$\infer[\rulelbl{act}]{\alpha.P \xrightarrow{\alpha} P}{}$ \hspace{8pt} 
$\infer[\rulelbl{lpar}]{P \parallel Q \xrightarrow{\alpha} P' \parallel Q}{P \xrightarrow{\alpha} P'}$ \hspace{8pt}
$\infer[\rulelbl{rpar}]{P \parallel Q \xrightarrow{\alpha} P \parallel Q'}{Q \xrightarrow{\alpha} Q'}$
 \\[8pt]

$\infer[\rulelbl{lift}]{\nu(B) P \xrightarrow{\alpha} \nu(B) P'}{{\scriptstyle\fun{sync}_B(P)=Q} & {\scriptstyle\fun{wait}_B(Q)} & P \xrightarrow{\alpha} P'}$
\hspace{8pt}

$\infer[\rulelbl{sync}]{\nu(B) P \xrightarrow{\alpha} Q'}{{\scriptstyle\fun{sync}_B(P)=Q} & {\scriptstyle\lnot\fun{wait}_B(Q)} & Q \xrightarrow{\alpha} Q'}$ 

\end{tabular}
\end{center}
with:
$\begin{array}{ll}
\left [
\begin{array}{l}
{\scriptstyle\fun{sync}_B(0) = 0} \\
{\scriptstyle\fun{sync}_B(\alpha.P) = \alpha.P} \\
{\scriptstyle\fun{sync}_B(P \parallel Q) = \fun{sync}_B(P) \parallel \fun{sync}_B(Q)} \\
{\scriptstyle\fun{sync}_B(\nu(B)P) = \nu(B) P} \\
{\scriptstyle\forall C\neq B,~\fun{sync}_B(\nu(C)P) = \nu (C)~\fun{sync}_B(P)}\\
{\scriptstyle\fun{sync}_B(\langle B \rangle P) = P} \\
{\scriptstyle\forall C\neq B,~\fun{sync}_B(\langle C \rangle P) = \langle C \rangle P} 
\end{array}
\right .
& \left [
\begin{array}{l}
{\scriptstyle\fun{wait}_B(0) = \fun{false}} \\
{\scriptstyle\fun{wait}_B(\alpha.P) = \fun{wait}_B(P)} \\
{\scriptstyle\fun{wait}_B(P \parallel Q) = \fun{wait}_B(P) \lor \fun{wait}_B(Q)} \\
{\scriptstyle\fun{wait}_B(\nu(B)P) = \fun{false}} \\
{\scriptstyle\forall C\neq B,~\fun{wait}_B(\nu(C)P) = \fun{wait}_B(P)}\\
{\scriptstyle\fun{wait}_B(\langle B \rangle P) = \fun{true}} \\
{\scriptstyle\forall C\neq B,~\fun{wait}_B(\langle C \rangle P) = \fun{wait}_B(P)}\\
\end{array}
\right .
\end{array}$
\end{definition}
The rule $\rulelbl{sync}$ above explains the
synchronization semantics for a given barrier $B$. The rule is non-trivial given the
broadcast semantics of barrier synchronization. The definition is based on two auxiliary functions.
First, the function $\fun{sync}_B(P)$ produces a derivative process $Q$ in which all the possible
synchronizations on barrier $B$ in $P$ have been effected. If $Q$ has a sub-process that
cannot yet synchronize on $B$, then the predicate $\fun{wait}_B(Q)$ is true and the synchronization
on $B$ is said incomplete.  In this case the rule $\rulelbl{sync}$ does not apply,
however the transitions \emph{within} $P$ can still happen through $\rulelbl{lift}$.

\subsection{The control graph of a process}

We now define the notion of a (finite) execution of a process.

\begin{definition}[execution]
An \emph{execution} $\sigma$ of $P$ is a finite sequence
$\langle \alpha_1, \ldots, \alpha_n \rangle$
such that there exist a set of processes $P'_{\alpha_1},\ldots,P'_{\alpha_n}$
and a path
$P \xrightarrow{\alpha_1} P'_{\alpha_1} \ldots \xrightarrow{\alpha_n} P'_{\alpha_n}$
with $P'_{\alpha_n} \nrightarrow$ (no transition is possible from $P'_{\alpha_n}$).
\end{definition}

We assume that the occurrences of the atomic actions
in a process expression have all distinct labels, $\alpha_1, \ldots, \alpha_n$. This is allowed since the actions
 are uninterpreted in the semantics (cf.~Definition~\ref{def:sem}).
Thus, each action $\alpha$ in an execution $\sigma$ can be associated to a unique \emph{position}, which we denote by $\sigma(\alpha)$.
For example if $\sigma = \langle \alpha_1, \ldots,\alpha_k,\ldots, \alpha_n \rangle$, 
then $\sigma(\alpha_k) = k$.

The behavior of a process can be abstracted by considering
the \emph{causal ordering relation} wrt. its atomic actions.

\begin{definition}[cause, direct cause] \label{def:cause}
Let $P$ be a process. An action $\alpha$ of $P$ is said a \emph{cause}
of another action $\beta$, denoted by $\alpha < \beta$, iff
for any execution $\sigma$ of $P$ we have $\sigma(\alpha) < \sigma(\beta)$.
Moreover, $\alpha$ is a \emph{direct cause} of $\beta$, denoted by $\alpha \prec \beta$
iff $\alpha < \beta$ and there is no $\gamma$ such that $\alpha < \gamma < \beta$.
The relation $<$ obtained from $P$ is denoted by $\PO(P)$.
\end{definition}

\comment{Review 1 and 3}
{(Review 1) Definition 2 was a bit ambiguous, it needs to be made more formal.

(Review 3) - After Definition 1: for the sake of simplicity an injective labeling is assumed. It means that we focus on structure and events excluding multiple occurrences of the same action. Thus in my opinion this is a rather important (severe?) restriction, making it possible to have Theorem 2. (See also the first sentence of Section 7.)
- Related to Definition 2 and the text that follows. If in all executions of a process P, we have action a before b, but always with either action c or action d inbetween a and b, would you then call action a  a "direct cause" of b? There could be something wrong here, since the covering of a po is supposed to be its transitive reduction. Or you should explain that this situation does not occur for the processes under consideration (which is strange in view of Theorem 2).}
{There was indeed an issue with definition 2 (now definition 4) because the causal ordering was taken ``large'' and we forgot to specify that $\gamma$ had to be distinct from $\alpha$ and $\beta$. In fact, it is better to consider the ``strict'' ordering. We thus changed the definition and we think it is now correct. More importantly, we explain in a more precise way the ``injective labelling'' assumption before the definition. In fact, the semantics of the language we consider does not interpret the actions. This is an important aspect of our combinatorial interpretation, to consider the concurrency aspects in a purely structural way. This is also commented in the related work section because it is a departure from the ``automata-based'' approaches (that interpret actions with alphabets, and synchronization with actions sharing the same name).}

\begin{wrapfigure}[12]{r}{7cm}
\vspace*{-1cm}
\begin{center}
$\nu(B)\left[\begin{array}{l}
\alpha_1.\langle B\rangle \parallel \alpha_2.\langle B\rangle \parallel \ldots \parallel \alpha_n.\langle B \rangle \\
\langle B \rangle.\beta_1 \parallel \langle B \rangle.\beta_2 \parallel \ldots \parallel \langle B \rangle.\beta_n
\end{array} \right ]$ 

\vspace*{3mm}

\begin{tikzpicture}[node distance=40pt, baseline]
\node[draw,circle] (a1) {$\alpha_1$};
\node[draw,circle, right of=a1] (a2) {$\alpha_2$};
\node[right of=a2] (adot) {$\cdots$};
\node[draw,circle, right of=adot] (an) {$\alpha_n$};

\node[draw,circle,below of=a1] (b1) {$\beta_1$};
\node[draw,circle,below of=a2] (b2) {$\beta_2$};
\node[below of=adot] (bdot) {$\cdots$};
\node[draw,circle,below of=an] (bn) {$\beta_n$};

\draw[->] (a1) -- (b1);
\draw[->] (a1) -- (b2);
\draw[->] (a1) -- (bn);
\draw[->, dotted] (a1) -- (bdot);

\draw[->] (a2) -- (b1);
\draw[->] (a2) -- (b2);
\draw[->] (a2) -- (bn);
\draw[->, dotted] (a2) -- (bdot);

\draw[->,dotted] (adot) -- (b1);
\draw[->,dotted] (adot) -- (b2);
\draw[->,dotted] (adot) -- (bn);
\draw[->,dotted] (adot) -- (bdot);

\draw[->] (an) -- (b1);
\draw[->] (an) -- (b2);
\draw[->] (an) -- (bn);
\draw[->, dotted] (an) -- (bdot);
\end{tikzpicture}

\caption{\label{fig:ncycle} A process of size $2n$ and its control graph with $2n$ nodes and $n^2$ edges.}
\end{center}
\end{wrapfigure}
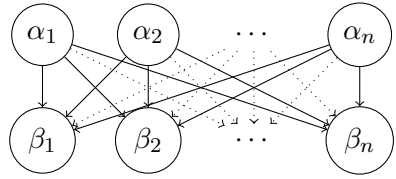

Obviously $\PO(P)$ is a \emph{partially ordered set} (poset) with \emph{covering} $\prec$,
capturing the \emph{causal ordering} of the actions of $P$.
The covering of a partial order is by construction an \emph{intransitive directed acyclic graph} (DAG),
hence the description of $\PO(P)$ itself is simply the transitive closure of the covering,
yielding $O(n^2)$ edges over $n$ elements. The worst case (maximizing the number of edges)
is a complete bipartite graph with two sets of $2n$ vertices connected by
$n^2$ edges (cf. Fig.~\ref{fig:ncycle}).\\
For most practical concerns we will only consider the covering, i.e.
the intransitive DAG obtained by the \emph{transitive reduction} of the order. It is possible to direclty
 construct this control graph, according to the following definition.

\comment{Review ? / Review 3}{
\textbf{Review ? (supplementary comment)}
I got stuck at Theorem 1:

Consider the process $P = \nu(B)\nu(C) [ <B><C>a.0 || <C><B>b.0 ]$ .
I assume it is a deadlock but I do not see why ctg(P) has a cycle,
which should be the case by Theorem 1.

I agree with others that the reader can get easily confused
since the notions around Theorem 1 are not clearly defined.

\textbf{Review 3} - The "definition" of control graph in Figure 3, given in the form of a computation, is hard to follow and seems incomplete. What is the meaning of the squiggly arrow?
}
{
  Indeed, the ctg definition is extracted from an algorithm implemented as a python program. We clearly went a little bit too far when trying to provide a concise definition. The thing is, it is not a difficult construction, only a little bit cumbersome to explain. In the previous definition, a subcase was missing for the singleton graphs (with vertices but no edges). 
  We rewrote the definition so that we explicitely build the vertices together with the edges. We believe the new definition is correct, and faithful to the original algorithm and Python implementation (except that in the latter, an exception is raised in case of a deadlock caused by a synchronization cycle).  The squiggly arrow is still present but hopefully in a way easier to understand.
 
Taking the new definition literaly, the construction is as follows on the example graph:

$\fun{ctg}(P) \begin{array}[t]{l} 
= \bigotimes_{\langle B \rangle} \bigotimes_{\langle  C \rangle} \left (  \fun{ctg}(\langle B \rangle \langle C \rangle a.0) \cup \fun{ctg}(\langle C \rangle \langle B \rangle b.0)\right) \\
= \bigotimes_{\langle B \rangle} \bigotimes_{\langle  C \rangle} \begin{array}[t]{l} 
\langle \{ \langle B \rangle, \langle C \rangle, a\}, \{(\langle B \rangle, \langle C \rangle), (\langle C \rangle, a)\}\rangle  \} \\
\cup \langle \{ \langle C \rangle, \langle B \rangle, b\}, \{(\langle C \rangle, \langle B \rangle), (\langle B \rangle, b)\}\rangle  
\end{array} \\
= \bigotimes_{\langle B \rangle} \bigotimes_{\langle  C \rangle} \langle \{\langle B\rangle, \langle C \rangle, a, b\}, \{(\langle B \rangle, \langle C \rangle), (\langle C \rangle, a),(\langle C \rangle, \langle B \rangle), (\langle B \rangle, b) \}\rangle \\
= \bigotimes_{\langle B \rangle} \langle \{\langle B\rangle, a, b\}, \{(\langle B \rangle, \langle B \rangle), (\langle B \rangle, a), (\langle B \rangle, b) \}\rangle \\
= \langle \{a, b\}, \{(\langle B \rangle, \langle B \rangle), (\langle B \rangle, a), (\langle B \rangle, b) \}\rangle
\end{array}$

Note that in the final step, the barrier $\langle B \rangle$ cannot be removed because of the self-loop. So there are two witnesses of the fact that the construction failed: there is still a barrier name in the process, and there is a cycle.

Now consider, the alternative:

$\fun{ctg}(P) \begin{array}[t]{l} 
= \bigotimes_{\langle B \rangle} \bigotimes_{\langle  C \rangle} \left (  \fun{ctg}(\langle B \rangle \langle C \rangle a.0) \cup \fun{ctg}(\langle B \rangle \langle C \rangle b.0)\right) \\
= \bigotimes_{\langle B \rangle} \bigotimes_{\langle  C \rangle} \begin{array}[t]{l} 
\langle \{ \langle B \rangle, \langle C \rangle, a\}, \{(\langle B \rangle, \langle C \rangle), (\langle C \rangle, a)\}\rangle  \} \\
\cup \langle \{ \langle B \rangle, \langle C \rangle, b\}, \{(\langle B \rangle, \langle C \rangle), (\langle C \rangle, b)\}\rangle  
\end{array} \\
= \bigotimes_{\langle B \rangle} \bigotimes_{\langle  C \rangle} \langle \{\langle B\rangle, \langle C \rangle, a, b\}, \{(\langle B \rangle, \langle C \rangle), (\langle C \rangle, a), (\langle C \rangle, b) \}\rangle \\
= \bigotimes_{\langle B \rangle} \langle \{\langle B\rangle, a, b\}, \{(\langle B \rangle, a), (\langle B \rangle, b) \}\rangle \\
= \langle \{a, b\}, \emptyset \rangle
\end{array}$

The graph with only two unrelated vertices and no edge is the correct construction.

We also added a pointer to the python notebook describing the construction with more details, including the python implementation and various examples. Of course we do not expect the reviewer to read the notebook, but we think it is a nice complement.
In the paper, we do not wish to add more details about this construction, which ultimately remains quite simple, of course when there is no error in the presentation.}

\begin{definition}[\label{fig:cover}Construction of control graphs]
Let $P$ be a process term. Its control graph is $\fun{ctg}(P)=\langle V, E\rangle$, constructed inductively as follows:

$\left [
\begin{array}{l|l}
\fun{ctg}(0) = \langle \emptyset, \emptyset \rangle 
& \phantom{i} \fun{ctg}(\nu(B)P) = \bigotimes_{\langle B\rangle} \fun{ctg}(P) \\[2pt]
\begin{array}{l}
\fun{ctg}(\alpha.P) = \alpha \leadsto \fun{ctg}(P)\\
\fun{ctg}(\langle B \rangle P) = \langle B \rangle \leadsto \fun{ctg}(P) \phantom{i}
\end{array}
& \phantom{i} \begin{array}[t]{l} \fun{ctg}(P \parallel Q) =  \fun{ctg}(P) \cup \fun{ctg}(Q) \\
  \text{  with } \langle V_1,E_1 \rangle \cup \langle V_2,E_2 \rangle = \langle V_1 \cup V_2, E_1 \cup E_2 \rangle 
\end{array}\\[2pt]
\end{array}
\right .$

with $\left\{\begin{array}{l}
x \leadsto \langle V, E \rangle = \langle V\cup \{x\}, \{(x,y) \mid  y \in \fun{srcs}(E) \lor (E=\emptyset \land y \in V)\} \rangle\\
\fun{srcs}(E) = \{ y \mid (y,z) \in E \land \nexists x,~(x,y)\in E \} \\
\bigotimes_{\langle B \rangle} \langle V, E\rangle = \langle V \setminus \{\langle B \rangle\}, E \setminus \begin{array}[t]{l}
 \{ (x,y) \mid x\neq y \land (x=\langle B \rangle \lor y = \langle B) \rangle \} \\
 \cup ~ \{ (\alpha,\beta) \mid \{(\alpha,\langle B \rangle),~(\langle B \rangle,\beta)\} \subseteq E \} \rangle
\end{array}
\end{array} \right .$
\end{definition}
Given a control graph $\Gamma$, the notation $x\leadsto \Gamma$ corresponds to prefixing the graph by a single atomic action. The set $\fun{srcs}(E)$ corresponds to the \emph{sources} of the edges in $E$, i.e. the vertices without an incoming edge.
 And $\bigotimes_{\langle B \rangle} \Gamma$ removes an explicit barrier node and connect all the processes ending in $B$ to the processes starting from it. In effect, this realizes the synchronization described by the barrier $B$.
We illustrate the construction on a simple process below:

\begin{small}
$\begin{array}[t]{l}  \fun{ctg}(\nu(B)\nu(C) [ \langle B\rangle \langle C\rangle a.0 || \langle B\rangle \langle C\rangle b.0 ]) \\
 = \bigotimes_{\langle B \rangle} \bigotimes_{\langle  C \rangle} \left (  \fun{ctg}(\langle B \rangle \langle C \rangle a.0) \cup \fun{ctg}(\langle B \rangle \langle C \rangle b.0)\right) \\
 = \bigotimes_{\langle B \rangle} \bigotimes_{\langle  C \rangle} \begin{array}[t]{l} 
 \langle \{ \langle B \rangle, \langle C \rangle, a\}, \{(\langle B \rangle, \langle C \rangle), (\langle C \rangle, a)\}\rangle  \} \\
 \cup \langle \{ \langle B \rangle, \langle C \rangle, b\}, \{(\langle B \rangle, \langle C \rangle), (\langle C \rangle, b)\}\rangle  
 \end{array} \\
 = \bigotimes_{\langle B \rangle} \bigotimes_{\langle  C \rangle} \langle \{\langle B\rangle, \langle C \rangle, a, b\}, \{(\langle B \rangle, \langle C \rangle), (\langle C \rangle, a), (\langle C \rangle, b) \}\rangle \\
 = \bigotimes_{\langle B \rangle} \langle \{\langle B\rangle, a, b\}, \{(\langle B \rangle, a), (\langle B \rangle, b) \}\rangle \\
 = \langle \{a, b\}, \emptyset \rangle
 \end{array}$
 \end{small}
 
The graph with only two unrelated vertices and no edge is the correct construction. Now, slightly changing the process we see how the construction
fails for deadlocked processes.

 \begin{small}
 $  \begin{array}[t]{l}
 \fun{ctg}(P)= \bigotimes_{\langle B \rangle} \bigotimes_{\langle  C \rangle} \left (  \fun{ctg}(\langle B \rangle \langle C \rangle a.0) \cup \fun{ctg}(\langle C \rangle \langle B \rangle b.0)\right) \\
 = \bigotimes_{\langle B \rangle} \bigotimes_{\langle  C \rangle} \begin{array}[t]{l} 
 \langle \{ \langle B \rangle, \langle C \rangle, a\}, \{(\langle B \rangle, \langle C \rangle), (\langle C \rangle, a)\}\rangle  \} \\
 \cup \langle \{ \langle C \rangle, \langle B \rangle, b\}, \{(\langle C \rangle, \langle B \rangle), (\langle B \rangle, b)\}\rangle  
 \end{array} \\
 = \bigotimes_{\langle B \rangle} \bigotimes_{\langle  C \rangle} \langle \{\langle B\rangle, \langle C \rangle, a, b\}, \{(\langle B \rangle, \langle C \rangle), (\langle C \rangle, a),(\langle C \rangle, \langle B \rangle), (\langle B \rangle, b) \}\rangle \\
 = \bigotimes_{\langle B \rangle} \langle \{\langle B\rangle, a, b\}, \{(\langle B \rangle, \langle B \rangle), (\langle B \rangle, a), (\langle B \rangle, b) \}\rangle \\
 = \langle \{a, b\}, \{(\langle B \rangle, \langle B \rangle), (\langle B \rangle, a), (\langle B \rangle, b) \}\rangle
\end{array}$
\end{small}
 
In the final step, the barrier $\langle B \rangle$ cannot be removed because of the self-loop. So there are two witnesses of the fact that the construction failed: there is still a barrier name in the process, and there is a cycle in the resulting graph.

\begin{theorem}\label{thm:process:dag} Let $P$ be a process, then $P$ has a deadlock iff $\fun{ctg}(P)$ has a cycle.
 Moreover, if $P$ is deadlock-free
 (hence it is a DAG) then $(\alpha,\beta) \in \fun{ctg}(P)$ iff $\alpha \prec \beta$ (hence the DAG is intransitive).
\end{theorem}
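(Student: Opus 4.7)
I would proceed by structural induction on the process term $P$, proving both claims simultaneously. The base case $P=0$ is immediate since $\fun{ctg}(0)=\langle\emptyset,\emptyset\rangle$. For the prefix cases $\alpha.P'$ and $\langle B \rangle P'$, the operator $x\leadsto$ adds a new node and links it only to the current sources of $\fun{ctg}(P')$; by the inductive hypothesis this neither creates a cycle nor introduces a transitive shortcut, and it faithfully reflects that the new action (or barrier) is a direct cause of exactly the minimal actions below. For the parallel case $P_1 \parallel P_2$, since the calculus offers no direct synchronization between $P_1$ and $P_2$ outside of an enclosing $\nu(B)$, the graph union contributes no new causal edge and no new cycle beyond those inherited from the components. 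Nodes labelled $\langle B \rangle$ from the two threads are identified by the set-theoretic union, but this identification only becomes semantically meaningful once a matching $\nu(B)$ wraps the composition.

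The critical case is $\nu(B)P'$, handled via the $\bigotimes_{\langle B \rangle}$ operation. Assume inductively that $\fun{ctg}(P')$ is acyclic iff $P'$ is deadlock-free. For the ``only if'' direction of claim (a), I would show that any new cycle produced by $\bigotimes_{\langle B \rangle}$ witnesses an alternating sequence of causality edges and ``co-synchronization'' steps on $B$; translating this back into the calculus exhibits a state where $\fun{sync}_B$ leaves $\fun{wait}_B$ permanently true (so rule $\rulelbl{sync}$ can never apply), while $\rulelbl{lift}$ is blocked because every remaining action is downstream of the unresolvable synchronization. Conversely, when $\bigotimes_{\langle B \rangle} \fun{ctg}(P')$ is acyclic, I would build a deadlock-free schedule by repeatedly firing atomic actions with no remaining predecessor through $\rulelbl{lift}$, and applying $\rulelbl{sync}$ as a single global step exactly when all predecessors of the identified $\langle B \rangle$ vertex are discharged; acyclicity guarantees this selection never blocks.

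The main obstacle is precisely this last step: reconciling the \emph{global} effect of rule $\rulelbl{sync}$ (which consumes every $\langle B \rangle$ occurrence simultaneously) with the \emph{local} vertex-identification performed by the set-theoretic union of control graphs, especially when multiple barrier names are nested. For claim (b), once deadlock-freedom is established, the intransitivity of $\fun{ctg}(P)$ and the identification of its edges with the direct causes $\prec$ follow from a parallel induction: both $\leadsto$ and $\bigotimes$ are designed to install an edge $(\alpha,\beta)$ only when no intermediate atomic node remains between $\alpha$ and $\beta$ in the collapsed graph, which matches Definition~\ref{def:cause} for $\prec$. Consequently, the resulting intransitive DAG coincides with the transitive reduction of $\PO(P)$.
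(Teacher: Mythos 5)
Your overall strategy (structural induction on the term, with the $\nu(B)$ case carried by $\bigotimes_{\langle B\rangle}$) differs from the paper, which argues semantically by extending the notion of execution to go ``past'' deadlocks and reading cycles off the causal relation. The bigger issue, however, is that your induction has concrete gaps. First, the invariant ``$\fun{ctg}(P')$ is acyclic iff $P'$ is deadlock-free'' is not a usable induction hypothesis for subterms with \emph{free} barrier occurrences: a term such as $\langle B\rangle a.0$ with $B$ unbound has no transition at all under the semantics (only an enclosing $\nu(B)$ can discharge the synchronization), yet its control graph is acyclic; the whole point of keeping $\langle B\rangle$ as an explicit vertex is that cyclicity is only decided later, at the binder. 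You need a strengthened statement for open terms --- phrased relative to the residual barrier vertices, or relative to a context binding them --- before the induction can even be set up.

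Second, your treatment of the parallel case is false as stated. You claim the set-theoretic union ``contributes no new cycle beyond those inherited from the components,'' but the identification of shared barrier vertices can create cycles at the union step itself. In the paper's own deadlock example, $\fun{ctg}(\langle B\rangle\langle C\rangle a.0)$ contributes the edge $(\langle B\rangle,\langle C\rangle)$ while $\fun{ctg}(\langle C\rangle\langle B\rangle b.0)$ contributes $(\langle C\rangle,\langle B\rangle)$; each component is acyclic, but their union already contains a $2$-cycle on $\{\langle B\rangle,\langle C\rangle\}$ before any $\bigotimes$ is applied. So cycles are not confined to the $\nu(B)$ case, and your case analysis misses where they actually arise. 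Finally, you yourself flag the reconciliation of the global broadcast semantics of \rulelbl{sync} with the local graph surgery of $\bigotimes_{\langle B\rangle}$ as ``the main obstacle'' and then leave it unresolved; exhibiting, from a cycle, a reachable state in which $\fun{wait}_B$ remains true forever (and conversely building a complete schedule from acyclicity) is precisely the content of the theorem, so as written the argument is a plan rather than a proof.
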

\begin{proof}[idea]
The proof is not difficult but slightly technical. The idea is to extend the notion of execution to go ``past'' deadlocks,
 thus detecting cycles in the causal relation. The details are given in companion document. \hfill \qed
\end{proof}

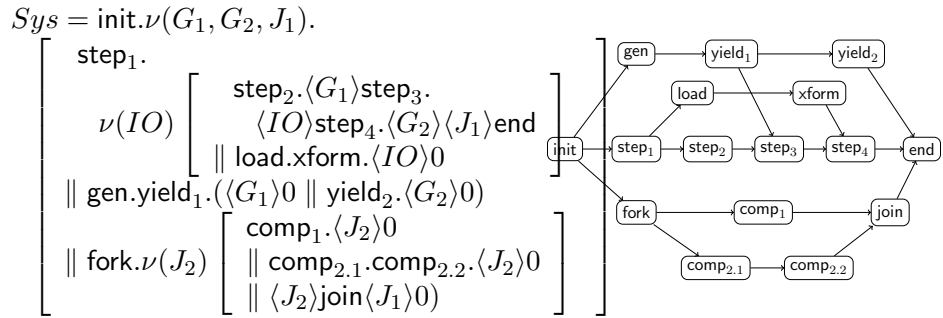
\begin{figure}[ht]
\begin{center}
\begin{tabular}{lr}
\begin{minipage}{0.55\textwidth}
$\begin{array}{l}
Sys = \mathsf{init}.\nu(G_1, G_2, J_1).\\
  \hspace{8pt}  \left [ \begin{array}{l}
    \begin{array}{l}
    \mathsf{step}_1. \\
     \hspace{8pt} \nu(IO)\left [\begin{array}{l}
       \hspace{8pt} \mathsf{step}_2.\langle{G_1}\rangle \mathsf{step}_3. \\
       \hspace{16pt}\langle{IO}\rangle \mathsf{step}_4.\langle{G_2}\rangle \langle J_1 \rangle \mathsf{end} \\
       \parallel \mathsf{load}.\mathsf{xform}.\langle{IO}\rangle 0
       \end{array} \right ] \\
     \end{array} \\
       \parallel  \mathsf{gen}.\mathsf{yield}_1.(\langle{G_1}\rangle 0 \parallel \mathsf{yield}_2.\langle{G_2}\rangle 0) \\
       \parallel \mathsf{fork}.\nu(J_2)\left[ \begin{array}{l}
                                              \mathsf{comp}_1.\langle J_2 \rangle 0 \\
                                              \parallel \mathsf{comp}_{2.1}. \mathsf{comp}_{2.2}.\langle J_2 \rangle 0\\
                                              \parallel \langle J_2\rangle \mathsf{join} \langle J_1 \rangle 0) \end{array}  \right]
       \end{array} \right]
\end{array}$
\end{minipage} &
\begin{minipage}{0.45\textwidth}
\scalebox{0.6}{\begin{tikzpicture}[act/.style={rectangle, draw, rounded corners}, node distance=45pt]
\node[act] (init) {$\mathsf{init}$};
\node[act, right of=init] (step1) {$\mathsf{step}_1$};
\node[act, above of=step1, node distance=60pt] (gstart) {$\mathsf{gen}$};
\draw[->] (init) -- (gstart);
\draw[->] (init)-- (step1);
\node[act, right of=step1] (step2) {$\mathsf{step}_2$};
\draw[->] (step1) -- (step2);
\node[act, right of=step2] (step3) {$\mathsf{step}_3$};
\draw[->] (step2) -- (step3);
\node[act, right of=step3] (step4) {$\mathsf{step}_4$};
\draw[->] (step3) -- (step4);
\node[act, right of=step4] (end) {$\mathsf{end}$};
\draw[->] (step4) -- (end);
\node[act, right of=gstart, node distance=60pt] (yield1) {$\mathsf{yield}_1$};
\draw[->] (gstart) -- (yield1);
\draw[->] (yield1) -- (step3);
\node[act, right of=yield1, node distance=80pt] (yield2) {$\mathsf{yield}_2$};
\draw[->] (yield1) -- (yield2);
\draw[->] (yield2) -- (end);
\node[act, above right of=step1, node distance=50pt] (load) {$\mathsf{load}$};
\draw[->] (step1) -- (load);
\node[act, right of=load, node distance=80pt] (xform) {$\mathsf{xform}$};
\draw[->] (load) -- (xform);
\draw[->] (xform) -- (step4);
\node[act, below of=step1, node distance=40pt] (fork) {$\mathsf{fork}$};
\node[right of=fork, node distance=50pt] (child) {};
\node[above of=child, node distance=0] (comp0) {};
\node[act, right of=comp0, node distance=30pt] (comp1) {$\mathsf{comp}_1$};
\node[act, below of=child, node distance=35pt] (comp2_1) {$\mathsf{comp}_{2.1}$};
\node[act, right of=comp2_1, node distance=65pt] (comp2_2) {$\mathsf{comp}_{2.2}$};
\node[act, right of=fork, node distance=160pt] (join) {$\mathsf{join}$};
\draw[->] (init) -- (fork);
\draw[->] (fork) -- (comp1);
\draw[->] (fork) -- (comp2_1);
\draw[->] (comp1) -- (join);
\draw[->] (comp2_1) -- (comp2_2);
\draw[->] (comp2_2) -- (join);
\draw[->] (join) -- (end);
\end{tikzpicture}}
\end{minipage}
\end{tabular}
\end{center}

\caption{\label{fig:example} An example process with barrier synchronizations (left) and its control graph (right). The process is of size 16 and
 it has exactly 1975974 possible executions.}

\end{figure}

In Fig.~\ref{fig:example} (left) we describe a system $Sys$ written in the proposed language,
together with the covering of $\PO(Sys)$, i.e. its control graph (right). We also indicate
the number of its possible executions, a question we address next.


\subsection{The counting problem}

One may think that in such a simple setting, any behavioral property, such as the counting problem that interests us, could be analyzed  
efficiently e.g. by a simple induction on the syntax. However, the devil is well hidden inside the box because of the following fact.

\begin{theorem}
Let $U$ be a partially ordered set. Then there exists a barrier synchronization process $P$ such that $\PO(P)$ is isomorphic to $U$.
\end{theorem}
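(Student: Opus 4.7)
The plan is to represent each element of $U$ as one atomic action and each covering pair as a two-party barrier synchronization. Assume $U$ is finite (any term in the grammar is finite). Label each $x\in U$ with a fresh atomic action $\alpha_x$, and every covering pair $x\prec y$ in $U$ with a fresh barrier name $B_{x,y}$. If $x$ has covering predecessors $p_1,\dots,p_k$ and covering successors $s_1,\dots,s_m$, define the straight-line thread
\[
  Q_x \;=\; \langle B_{p_1,x}\rangle\cdots\langle B_{p_k,x}\rangle\,\alpha_x.\langle B_{x,s_1}\rangle\cdots\langle B_{x,s_m}\rangle\,0,
\]
with the obvious degenerate cases when $k=0$ or $m=0$. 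Let $P$ be the parallel composition of all the $Q_x$ wrapped inside the scopes of every barrier $B_{x,y}$. Each barrier $B_{p,x}$ then has exactly two participants, $Q_p$ (after $\alpha_p$) and $Q_x$ (before $\alpha_x$), so a pairwise meeting at $B_{p,x}$ is what mirrors a single covering step of $U$.

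I would then verify two things in order. \emph{(Deadlock freeness.)} By induction on $|U|$: at any reachable state, pick a not-yet-fired element $x$ that is minimal among the remaining ones. All its covering predecessors have already executed $\alpha_p$, so each $Q_p$ sits somewhere inside its trailing block of successor barriers; the pending barriers $B_{p,x}$ can then be fired in an order respecting the sequential nesting within $Q_x$ and within each $Q_p$, after which $\alpha_x$ becomes enabled. By Theorem~\ref{thm:process:dag}, $\fun{ctg}(P)$ is acyclic and $\PO(P)$ is well-defined. \emph{(Isomorphism via $x\mapsto\alpha_x$.)} The forward direction is immediate: if $x\leq_U y$, take a covering chain $x=z_0\prec\cdots\prec z_\ell=y$; each barrier $B_{z_i,z_{i+1}}$ operationally forces $\alpha_{z_i}$ to precede $\alpha_{z_{i+1}}$ in every execution, so $\alpha_x<\alpha_y$ in $\PO(P)$. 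For the converse, given $x\not\leq_U y$, take a linear extension $\tau$ of $U$ that places $y$ before $x$ and realize $\tau$ as an actual execution of $P$.

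The main obstacle is precisely this last realizability claim: every linear extension of $U$ must be schedulable as an execution of $P$, despite the fact that the barrier waits inside each $Q_x$ are sequentially composed and therefore pre-impose a fixed order on which of the adjacent barriers \emph{fire}. I would prove it by induction on the prefix of $\tau$ executed so far, maintaining the invariant that the currently-pending barriers can always be advanced far enough to enable $\alpha_{\tau(i+1)}$; the crucial observation is that any wait-for cycle among pending barriers would project down to a cycle in the covering of $U$, which is forbidden. Since an execution records only atomic actions, the internal order of silent barrier firings is unobservable, so the sequential encoding never constrains observable interleavings beyond those allowed by $U$, yielding $\PO(P)\cong U$.
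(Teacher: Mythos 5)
Your encoding is genuinely different from the paper's (you use one barrier per \emph{covering edge}, with the waits and signals of each node chained sequentially inside a single thread; the paper uses one barrier per \emph{node}, with $P_\alpha = \langle B_\alpha\rangle\,\alpha.\left[\langle B_{\beta_1}\rangle 0 \parallel \dots \parallel \langle B_{\beta_k}\rangle 0\right]$, so that each thread carries at most one pending wait and all signals are spawned in parallel). Unfortunately the step you yourself flag as the main obstacle contains a false claim. Your argument for both deadlock-freeness and realizability is that ``any wait-for cycle among pending barriers would project down to a cycle in the covering of $U$.'' This is not true: the wait-for relation among your barriers is induced by the \emph{sequential nesting} you impose inside each $Q_x$, not by the covering relation, and it can be cyclic even when the covering is a DAG. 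Concretely, take $U$ with covering $a\prec c$, $a\prec d$, $b\prec c$, $b\prec d$ and the orderings $Q_a=\alpha_a.\langle B_{a,c}\rangle\langle B_{a,d}\rangle 0$, $Q_b=\alpha_b.\langle B_{b,d}\rangle\langle B_{b,c}\rangle 0$, $Q_c=\langle B_{b,c}\rangle\langle B_{a,c}\rangle\alpha_c.0$, $Q_d=\langle B_{a,d}\rangle\langle B_{b,d}\rangle\alpha_d.0$. After $\alpha_a$ and $\alpha_b$ fire, every one of the four barriers has one participant buried under a different barrier prefix, so by the $\rulelbl{sync}$ rule none can fire: the process deadlocks along the cycle $B_{a,c}\to B_{b,c}\to B_{b,d}\to B_{a,d}\to B_{a,c}$, which projects to nothing cyclic in $U$. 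Even when no deadlock occurs, a bad ordering adds spurious causality: with $a\prec c$, $a\prec d$, $b\prec c$ and $Q_a=\alpha_a.\langle B_{a,c}\rangle\langle B_{a,d}\rangle 0$, $Q_c=\langle B_{b,c}\rangle\langle B_{a,c}\rangle\alpha_c.0$, every execution is forced to place $\alpha_b$ before $\alpha_d$ although $b\not\le_U d$, so $\PO(P)$ is strictly finer than $U$.

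The gap, then, is that your construction leaves the order of the chained barrier prefixes unspecified, while its correctness depends entirely on that choice; to repair it you would need an extra lemma exhibiting per-node orderings of in- and out-barriers (e.g.\ keyed by a fixed linear extension of $U$) and proving that they introduce no wait-for cycle and no causal constraint beyond $U$ --- a nontrivial combinatorial argument that the proposal does not supply. The paper's construction is designed precisely so that this issue cannot arise: since the signalling participants $\langle B_{\beta_i}\rangle 0$ are parallel components rather than a sequential chain, the barrier $B_\beta$ becomes firable exactly when all covering predecessors of $\beta$ have acted, and no artificial ordering among barrier firings is ever imposed. I would recommend either switching to a node-indexed barrier scheme of that kind, or stating and proving the ordering lemma explicitly.
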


\begin{proof} (sketch).
Consider $G$ the (intransitive) covering DAG of a poset $U$. 
We suppose each vertex of $G$ to be uniquely identified
by a label ranging over $\alpha_1, \alpha_2,\ldots, \alpha_n$.
The objective is to associate to each such vertex labeled $\alpha$ 
a process expression $P_\alpha$. The construction is done \emph{backwards},
starting from the \emph{sinks} (vertices without outgoing edges) of $G$
and bubbling-up until its \emph{sources} (vertices without incoming edges).  

There is a single rule to apply, considering a vertex labeled $\alpha$
whose children have already been processed, i.e. in a situation depicted as follows:

\begin{tabular}{ll}
\begin{tikzpicture}[baseline=-3.5ex, node distance=40pt]
\begin{scope}[decoration={
    markings,
    mark=at position 0.5 with {\arrow{stealth}}}
    ] 

\node (beta) {$\alpha$};
\node[ellipse,below of=beta] (alpha2) {$\ldots$};
\draw[postaction={decorate}, dotted] (alpha2) -- (beta);
\node[ellipse,draw,left of=alpha2] (alpha1) {$P_{\beta_1}$};
\draw[postaction={decorate}, very thick] (alpha1) -- (beta);
\node[ellipse,draw,right of=alpha2] (alphak) {$P_{\beta_k}$};
\draw[postaction={decorate}, very thick] (alphak) -- (beta);

\end{scope}
\end{tikzpicture}
 &

\begin{minipage}{0.7\textwidth}
\phantom{In this situation:}
\[ P_\alpha = 
\langle B_\alpha \rangle  \alpha. \left [ \langle B_{\beta_1} \rangle 0 \parallel \ldots \parallel \langle B_{\beta_k} \rangle 0 \right ]. 
\]
\end{minipage}
\end{tabular}

In the special case $\alpha$ is a sink we simply define $P_\alpha = \langle B_\alpha \rangle \alpha.0$.
In this construction it is quite obvious that $\alpha \prec \beta_i$ for each of the $\beta_i$'s, 
provided the barriers $B_\alpha,B_{\beta_1},\ldots,B_{\beta_k}$ are defined somewhere in the outer scope.

At the end we have a set of processes $P_{\alpha_1}, \ldots, P_{\alpha_n}$ associated to the vertices
of $G$ and we finally define
$ P = \nu(B_{\alpha_1}) \ldots \nu(B_{\alpha_n}) \left [ 
  P_{\alpha_1} \parallel \ldots \parallel P_{\alpha_n} \right ]$.

That $\PO(P)$ has the same covering as $U$ is a simple consequence of the construction. \hfill \qed
\end{proof}


\begin{corollary} \label{thm:linext:run}
Let $P$ be a non-deadlocked process. Then $\langle \alpha_1,\ldots,\alpha_n\rangle$  is an execution of $P$ 
if it is a linear extension of $\PO(P)$. 
Consequently, the number of executions of $P$ is equal to the number of linear extensions of $\PO(P)$.
\end{corollary}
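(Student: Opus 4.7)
The plan is to prove the two directions of the equivalence between executions of $P$ and linear extensions of $\PO(P)$: the counting statement then follows immediately, since an execution is entirely determined by its sequence of action labels (these being distinct by the injective labelling assumption) and so is a linear extension. Throughout I would use Theorem~\ref{thm:process:dag}, which for a deadlock-free process identifies $\fun{ctg}(P)$ with the intransitive covering DAG of $\PO(P)$.

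For the forward direction, let $\sigma = \langle \alpha_1, \ldots, \alpha_n \rangle$ be an execution of $P$. The maximality condition $P'_{\alpha_n} \nrightarrow$, combined with the hypothesis that $P$ is non-deadlocked, forces $\sigma$ to exhaust all atomic actions of $P$: if some action remained unfired then the residual $P'_{\alpha_n}$ would itself be a deadlock, contradicting deadlock-freedom preservation along transitions. That the total order induced by $\sigma$ refines the partial order $<$ is immediate from Definition~\ref{def:cause}, since $\alpha < \beta$ requires $\sigma(\alpha) < \sigma(\beta)$ in \emph{every} execution, in particular in $\sigma$.

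For the backward direction, let $\tau = \langle \alpha_1, \ldots, \alpha_n \rangle$ be a linear extension of $\PO(P)$. I would proceed by induction on $k$ to construct derivatives $P_0 = P, P_1, \ldots, P_k$ such that $P_{k-1} \xrightarrow{\alpha_k} P_k$, maintaining the invariant that $P_k$ is non-deadlocked, its remaining atomic actions are exactly $\{\alpha_{k+1},\ldots,\alpha_n\}$, and $\fun{ctg}(P_k)$ is obtained from $\fun{ctg}(P)$ by deleting the nodes $\alpha_1,\ldots,\alpha_k$ and their incident edges. Because $\tau$ is a linear extension, all predecessors of $\alpha_{k+1}$ in $\PO(P)$ lie in $\{\alpha_1,\ldots,\alpha_k\}$, so $\alpha_{k+1}$ is a source of $\fun{ctg}(P_k)$. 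The core lemma to establish is then that any source action of $\fun{ctg}(P_k)$ is enabled in $P_k$ by the operational semantics; together with the invariant this yields $P_k \xrightarrow{\alpha_{k+1}} P_{k+1}$, and at $k=n$ we obtain $P_n \nrightarrow$ since all actions have been consumed.

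The main obstacle is precisely the lemma ``source of the control graph implies fireable in the process''. The proof goes by a secondary structural induction on the process term, with a careful case analysis on Definition~\ref{def:sem}. The sensitive case is the synchronization rule $\rulelbl{sync}$: one must show that when $\alpha_{k+1}$ is a source, every barrier guarding it has been syntactically matched in every parallel branch, so that $\lnot\fun{wait}_B$ holds at the appropriate scope and $\fun{sync}_B$ exposes $\alpha_{k+1}$ as a prefix. This is exactly the condition encoded by the collapsing operator $\bigotimes_{\langle B \rangle}$ of Definition~\ref{fig:cover}: a barrier disappears from the control graph only when every branch has reached it, which is the combinatorial shadow of the $\fun{sync}_B / \fun{wait}_B$ interplay. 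Once this correspondence is verified, preservation of the invariant under one transition (including the routine update of the control graph under node removal) is straightforward, and the induction closes.
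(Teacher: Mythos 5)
Your plan is sound and is essentially the argument the paper has in mind: the paper itself gives no proof of this corollary (it is presented as an immediate consequence of Theorem~\ref{thm:process:dag} and the operational semantics, with details deferred to the companion document), and your two-directional argument --- executions respect $<$ by Definition~\ref{def:cause}, and conversely an induction along the linear extension using the key lemma that the sources of $\fun{ctg}(P_k)$ are exactly the enabled actions --- is the natural way to fill that gap. You correctly identify that the entire content lives in the ``source iff fireable'' lemma and in the invariant that firing a source action (possibly consuming a barrier via $\rulelbl{sync}$) updates the control graph by node deletion; note also that the exhaustiveness claim in your forward direction (every maximal run fires all $n$ actions) is itself an instance of that same lemma, since a nonempty acyclic control graph always has a source. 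Both points are stated rather than proved in your sketch, but that matches the level of detail the paper itself commits to.
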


We now reach our ``negative'' result that is the starting point of the rest of the paper: there is no
efficient algorithm to count the number of executions, even for such simplistic barrier processes.
\begin{corollary}
Counting the number of executions of a (non-deadlocked) barrier synchronization process is $\sharp{P}$-complete\footnote{A function $f$ is
 in $\sharp$P if there is a polynomial-time non-deterministic Turing machine $M$ such that for any instance $x$, $f(x)$ is the number
of executions of $M$ that accept $x$ as input. See~\url{https://en.wikipedia.org/wiki/\%E2\%99\%AFP-complete}}.
\end{corollary}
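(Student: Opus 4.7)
The plan is to reduce the counting of linear extensions of arbitrary finite posets --- a classical $\sharp{P}$-complete problem due to Brightwell and Winkler (1991) --- to our execution-counting problem, leveraging Theorem 2 and the preceding corollary as a two-step bridge.

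For hardness, I would argue as follows. Given an instance $U$ of the linear-extension counting problem, I would apply the construction in the proof of Theorem 2 to produce a process $P$ with $\PO(P) \cong U$. A quick inspection shows that this construction runs in polynomial (in fact linear) time in the size of $U$: each vertex $\alpha$ of the covering DAG contributes a sub-term of the form $\langle B_\alpha \rangle \alpha.[\langle B_{\beta_1}\rangle 0 \parallel \dots \parallel \langle B_{\beta_k}\rangle 0]$, whose size is linear in the out-degree of $\alpha$, and the top-level assembly uses one $\nu$-binder per vertex plus parallel composition. The resulting process is deadlock-free, since by Theorem 1 its control graph is the intransitive reduction of $U$, which is acyclic. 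By the preceding corollary, the number of executions of $P$ equals the number of linear extensions of $\PO(P) = U$, so a polynomial-time algorithm for the former would contradict Brightwell--Winkler.

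For membership in $\sharp{P}$, I would exhibit a polynomial-time non-deterministic Turing machine that, on input $P$, repeatedly guesses a fireable atomic action using the operational semantics of Definition 3, applies the corresponding transition, and accepts exactly when it reaches a terminal derivative $P' \nrightarrow$. The auxiliary functions $\fun{sync}_B$ and $\fun{wait}_B$ are defined by structural recursion on process terms and can thus be evaluated in polynomial time; moreover, the length of any execution is bounded by the number of atomic actions in $P$. Hence the total simulation is polynomial, and distinct non-deterministic choice sequences produce distinct executions, so the count of accepting paths equals the number of executions of $P$. The only mildly delicate point is checking that the polynomial bounds carry through both the Theorem 2 reduction and the simulation; once these are in hand the statement follows immediately.
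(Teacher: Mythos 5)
Your proposal is correct and follows essentially the same route as the paper, which simply observes that the statement is a direct consequence of Brightwell--Winkler via Theorem 2 and the preceding corollary. You supply more detail than the paper does --- in particular the explicit polynomial-time bound on the reduction and the $\sharp\mathrm{P}$-membership argument via a nondeterministic simulation of the operational semantics --- but the underlying idea is identical.
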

This is a direct consequence of~\cite{BW91} since counting executions of processes boils down to counting linear extensions
 in (arbitrary) posets.

\section{A generic decomposition scheme and its (symbolic) counting algorithm}
\label{sec:bit}

We describe in this section a generic (and symbolic) solution to the counting problem, based on a systematic
decomposition of finite Posets (thus, by Theorem~\ref{thm:process:dag}, of process expressions)
through their covering DAG (i.e. control graphs).

\subsection{Decomposition scheme}

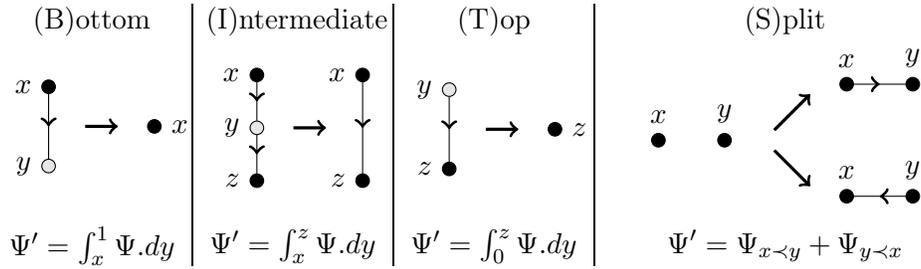
\begin{figure}[ht]
\begin{center}
\begin{tabular}{M | M | M | L}

(B)ottom& (I)ntermediate & (T)op & (S)plit  \\
\begin{tikzpicture}[node distance=30pt]
\node[shape=circle,draw,fill,label=left:$x$,scale=0.5] (x) {};
\node[shape=circle,draw,fill=gray!20, below
  of=x,label=left:$y$,scale=0.5] (y) {};

\path[style=ordedge] (x) -- (y);

\node (lsepp) at ($(x)!0.5!(y)$) {};
\node[node distance=10pt, right of=lsepp] (lsep) {};
\node[right of=lsep, node distance=20pt] (rsep) {};
\draw[->,very thick] (lsep) -- (rsep);

\node[shape=circle,draw,fill,label=right:$x$, right of=rsep, node
  distance=10pt,scale=0.5] (x') {};

\end{tikzpicture}
&
\begin{tikzpicture}[node distance=20pt]
\node[shape=circle,draw,fill,label=left:$x$,scale=0.5] (x) {};
\node[shape=circle,draw,fill=gray!20, below
  of=x,label=left:$y$,scale=0.5] (y) {}; \node[shape=circle,draw,fill,
  below of=y,label=left:$z$,scale=0.5] (z) {};

\path[style=ordedge] (x) -- (y); \path[style=ordedge] (y) -- (z);

\node[right of=y, node distance=10pt] (lsep) {};
\node[right of=lsep] (rsep) {};
\draw[->,very thick] (lsep) -- (rsep);

\node[right of=rsep, node distance=10pt] (y') {};
\node[shape=circle,draw,fill,label=left:$x$, above of=y',scale=0.5] (x') {};
\node[shape=circle,draw,fill, below of=y',label=left:$z$,scale=0.5] (z') {};

\draw[style=ordedge] (x') -- (z') ;
\end{tikzpicture}
&
\begin{tikzpicture}[node distance=30pt]
\node[shape=circle,draw,fill=gray!20,label=left:$y$,scale=0.5] (y) {};
\node[shape=circle,draw,fill, below of=y,label=left:$z$,scale=0.5] (z) {};

\path[style=ordedge] (y) -- (z);

\node (lsepp) at ($(z)!0.5!(y)$) {};
\node[node distance=10pt, right of=lsepp] (lsep) {};
\node[right of=lsep, node distance=20pt] (rsep) {};
\draw[->,very  thick] (lsep) -- (rsep);

\node[right of=rsep, node distance=10pt] (y') {};
\node[shape=circle,draw,fill, right of=rsep, node
  distance=10pt,label=right:$z$,scale=0.5] (z') {};
\end{tikzpicture}
&
\begin{tikzpicture}[node distance=20pt]
\node[shape=circle,draw,fill,label=above:$x$,scale=0.5] (x) {};
\node[shape=circle,draw,fill, right
  of=x,label=above:$y$,scale=0.5, node distance=25pt] (y) {}; 

\node[right of=y, node distance=15pt] (lsep) {};
\node[above right of=lsep, node distance=30pt] (rsep1) {};
\draw[->,very thick] (lsep) -- (rsep1);

\node[right of=rsep1, shape=circle,draw,fill,label=above:$x$,scale=0.5, node distance=10pt] (x1) {};
\node[shape=circle,draw,fill, right
  of=x1,label=above:$y$,scale=0.5, node distance=25pt] (y1) {}; 
\draw[style=ordedge] (x1) -- (y1) ;

\node[below right of=lsep, node distance=30pt] (rsep2) {};
\draw[->,very thick] (lsep) -- (rsep2);

\node[right of=rsep2, shape=circle,draw,fill,label=above:$x$,scale=0.5, node distance=10pt] (x2) {};
\node[shape=circle,draw,fill, right
  of=x2,label=above:$y$,scale=0.5, node distance=25pt] (y2) {}; 
\draw[style=ordedge] (y2) -- (x2) ;

\end{tikzpicture}
\\[4pt]
$\Psi' = \int^1_x \Psi . dy$
&
$\Psi' = \int^z_x \Psi . dy$
&
$\Psi' = \int^z_0 \Psi . dy$
&
$\displaystyle\Psi' = \Psi_{x \prec y} + \Psi_{y \prec x}$
\end{tabular}
\caption{\label{fig:decomposition} The BITS-decomposition and the construction of the counting formula.}
\end{center}
\end{figure}

In Fig.~\ref{fig:decomposition} we introduce the four decomposition rules that define
the BITS-decomposition. The first three rules are somehow straightforward.
The (B)-rule (resp. (T)-rule) allows to consume a node with no outgoing
(resp. incoming) edge and one incoming (resp. outgoing) edge. In a way, these two rules consume the
``pending'' parts of the DAG. The (I)-rule allows to
consume a node with exactly one incoming and outgoing edge. The final
(S)-rule takes two incomparable nodes $x,y$ and decomposes
 the DAG in two variants: the one for $x\prec y$ and the one for the converse $y\prec x$.

We now discuss the main interest of the decomposition: the incremental construction of an integral
formula that solves the counting problem. The calculation is governed by the equations specified below the rules in Fig.~\ref{fig:decomposition},
 in which the current formula $\Psi$ is updated  according to the definition of $\Psi'$ in the equations. 

\begin{theorem}\label{thm:bitc:integral}
The numerical evaluation of the integral formula built by the BITS-decomposition
yields the number of linear extensions of the corresponding Poset.
Moreover, the applications of the BITS-rules are confluent, in the sense that all the
sequences of (valid) rules reduce the DAG to an empty graph\footnote{At the end of the decomposition, the DAG is in fact reduced to a single node, which is removed by an integration between $0$ and $1$.}.
\end{theorem}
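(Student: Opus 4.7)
The plan is to interpret the BITS formula via the order polytope of Stanley. Associate to each vertex $v$ of the DAG a real variable $x_v\in[0,1]$, and consider the order polytope
$$\mathcal{O}(P)=\{(x_v)_{v\in V}\in[0,1]^n:x_u\le x_v\text{ whenever }u\prec v\}.$$
The classical identity $e(P)=n!\cdot\mathrm{vol}(\mathcal{O}(P))$ relates the number of linear extensions to a polytope volume, and the plan is to show that the BITS procedure computes exactly this volume (the factor $n!$ being absorbed in the ``numerical evaluation'' step, in the spirit of the density method of~\cite{BaMaWa18}).

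To that end I would establish the following invariant by induction on the number of applied rules: at every intermediate stage, if $G'$ denotes the reduced DAG with remaining vertex set $W$, then the symbolic expression $\Psi$ is a function of $(x_w)_{w\in W}$ such that $\mathrm{vol}(\mathcal{O}(P))$ equals $\int_{\mathcal{O}(G')}\Psi\,\prod_{w\in W}dx_w$. Preservation of the invariant under each rule is a direct application of Fubini together with elementary geometry. For rule (B), the removed vertex $y$ is constrained only by $x\le y\le 1$ and can be integrated out first over $[x,1]$; rule (T) is symmetric with bounds $[0,z]$; for (I) the constraints $x\le y\le z$ yield $\int_x^z\Psi\,dy$, while the new edge $x\to z$ in the reduced DAG propagates the inferred constraint $x\le z$ to the outer integrations; and for (S) two incomparable vertices partition the polytope, up to a measure-zero boundary, into two regions whose volumes correspond exactly to the two children produced by the rule. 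At termination the DAG is a single vertex and the final integration over $[0,1]$ returns $\mathrm{vol}(\mathcal{O}(P))$.

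For the confluence claim, two ingredients are needed. Applicability: on any DAG with at least two vertices some rule applies, because either there exist two incomparable vertices (so (S) is enabled) or the DAG is a chain, in which case one of its ends has a single neighbor and qualifies for (B) or (T). Termination: the lexicographic measure $\bigl(|V|,\text{\# incomparable pairs}\bigr)$ strictly decreases under every rule, since (B), (I), (T) lower the vertex count, and (S) preserves it but strictly reduces the number of incomparable pairs in each generated sub-DAG (the newly oriented edge together with its transitive consequences are no longer incomparable). Independence of the numerical value from the chosen sequence then follows from Fubini's theorem, which commutes the order of the iterated integrations, together with the additivity used in (S).

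The main subtlety, and in my view the principal obstacle to a clean write-up, lies in rule (I): if the ordering between $x$ and $z$ has not been pinned down by outer integrations, the symbolic expression $\int_x^z$ is negative when $x>z$. I would resolve this by interpreting $\Psi$ throughout the reduction as (a polynomial multiplied by) the indicator of the order polytope of the current reduced graph, so that the edge $x\to z$ introduced by (I) restricts subsequent integrations to the region $x\le z$ and the orientation of the $y$-integral becomes immaterial. Verifying that this bookkeeping is consistent across arbitrary interleavings of the four rules is the technical heart of the argument.
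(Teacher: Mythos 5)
Your proposal is correct and follows essentially the same route as the paper's own argument: embedding the poset into the unit hypercube via Stanley's order polytope, identifying each rule's integral with a Fubini slice (with (B) and (T) as the $x=0$, $z=1$ special cases of (I), and (S) as an additive partition of the polytope), and deriving confluence from the order-independence of iterated integration. Your additional care about termination of the rewriting and about interpreting $\Psi$ as carrying the indicator of the current reduced polytope (so that $\int_x^z$ never acquires the wrong sign) only makes explicit what the paper's sketch handles implicitly via convexity of the projection.
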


\comment{(Review 3)}
{- The BITS decomposition and the associated construction of the counting symbolic formulae are given in Figure 4 with as good as no further explanation of the application or justification for the formulae.}
{We decided to adopt a ``top-down'' presentation of the decomposition. Hence, in this section we only introduce the rules in an informal way, and we rely on the example to give a good intuition about it. Since the decomposition is only about removing nodes in a graph, it is not a complex matter. However, the justification of  the integral computations rely on classical combinatorics results, but based on a non-trivial mathematical theory for readers not familiar with the field. Thus, we postponed the justification to section 3.2. We reformulated a little bit section 3.1 to insist on this fact.}

The precise justification of the integral computation and  the proof for the theorem above are postponed to Section~\ref{sec:hypercube} below. We first consider an example.
\begin{example}\label{ex:bitc}Illustrating the BITS-decomposition scheme.
  \begin{center}
 \resizebox{.9\textwidth}{!}{
\begin{tabular}{MSMSMSMAA}
\begin{tikzpicture}[node distance=40pt]
  \node[shape=circle,draw,fill,label=above:$x_1$,scale=0.5] (1) {};
  \node[shape=circle,draw,fill,label=right:$x_2$,scale=0.5, below of=1] (3) {};
  \node[left of=3,node distance=10pt] (b1) {};
  \node[right of=3,node distance=10pt] (b2) {};
  \node[shape=circle,draw,fill,label=left:$x_3$,scale=0.5, below of=b1] (4) {};
  \node[shape=circle,draw,fill,label=right:$x_4$,scale=0.5, below of=b2] (5) {};
  \node[shape=circle,draw,fill,label=left:$x_5$,scale=0.5, below of=4] (6) {};
\node[shape=circle,draw,fill,label=right:$x_6$,scale=0.5, below of=5] (7) {};
\node[shape=circle,draw,fill,label=above:$x_7$,scale=0.5, right of=7] (8) {};
\node (mid67) at ($(6)!0.5!(7)$) {};
\node[shape=circle,draw,fill,label=below:$x_8$,scale=0.5, below of=mid67] (9) {};

\draw[style=ordedge] (1) -- (3);
\draw[style=ordedge] (3) -- (4);
\draw[style=ordedge] (3) -- (5);
\draw[style=ordedge] (4) -- (6);
\draw[style=my_ordedge] (4) -- (7);
\draw[style=my_ordedge] (5) -- (6);
\draw[style=ordedge] (5) -- (7);
\draw[style=ordedge] (6) -- (9);
\draw[style=ordedge] (7) -- (9);
\draw[style=ordedge] (4) -- (6);
\draw[style=ordedge] (5) -- (8);
\draw[style=ordedge] (8) -- (9);
\end{tikzpicture}
&
\begin{tikzpicture}
\node[node distance=20pt] (lsep) {};
\node[right of=lsep,node distance=20pt] (rsep) {};

\draw[->,very thick] (lsep) -- (rsep)  node [auto, midway] {$T_{x_1}$};
\end{tikzpicture}
&
\begin{tikzpicture}[node distance=40pt]
  \node[shape=circle,draw,fill,label=right:$x_2$,scale=0.5] (3) {};
  \node[left of=3,node distance=10pt] (b1) {};
  \node[right of=3,node distance=10pt] (b2) {};
  \node[shape=circle,draw,fill,label=left:$x_3$,scale=0.5, below of=b1] (4) {};
  \node[shape=circle,draw,fill,label=right:$x_4$,scale=0.5, below of=b2] (5) {};
  \node[shape=circle,draw,fill,label=left:$x_5$,scale=0.5, below of=4] (6) {};
\node[shape=circle,draw,fill,label=right:$x_6$,scale=0.5, below of=5] (7) {};
\node[shape=circle,draw,fill,label=above:$x_7$,scale=0.5, right of=7] (8) {};
\node (mid67) at ($(6)!0.5!(7)$) {};
\node[shape=circle,draw,fill,label=below:$x_8$,scale=0.5, below of=mid67] (9) {};

\draw[style=ordedge] (1) -- (3);
\draw[style=ordedge] (3) -- (4);
\draw[style=ordedge] (3) -- (5);
\draw[style=ordedge] (4) -- (6);
\draw[style=my_ordedge] (4) -- (7);
\draw[style=my_ordedge] (5) -- (6);
\draw[style=ordedge] (5) -- (7);
\draw[style=ordedge] (6) -- (9);
\draw[style=ordedge] (7) -- (9);
\draw[style=ordedge] (4) -- (6);
\draw[style=ordedge] (5) -- (8);
\draw[style=ordedge] (8) -- (9);
\end{tikzpicture}
&
\begin{tikzpicture}
\node[node distance=20pt] (lsep) {};
\node[right of=lsep,node distance=20pt] (rsep) {};

\draw[->,very thick] (lsep) -- (rsep) node [auto, midway] {$S_{\{x_3, x_4\}}$};
\end{tikzpicture}
&
\begin{tikzpicture}[node distance=40pt]
  \node[shape=circle,draw,fill,label=right:$x_2$,scale=0.5] (3) {};
  \node[shape=circle,draw,fill,label=right:$x_4$,scale=0.5, below of=3] (5) {};
  \node[shape=circle,draw,fill,label=left:$x_3$,scale=0.5, below of=5] (4) {};
  \node[shape=circle,draw,fill,label=right:$x_6$,scale=0.5, below of=4] (7) {};
  \node[shape=circle,draw,fill,label=left:$x_5$,scale=0.5, left of=7] (6) {};
\node[shape=circle,draw,fill,label=above:$x_7$,scale=0.5, right of=7] (8) {};
\node[shape=circle,draw,fill,label=below:$x_8$,scale=0.5, below of=7] (9) {};

\draw[style=ordedge] (1) -- (3);
\draw[style=ordedge] (5) -- (4);
\draw[style=ordedge] (3) -- (5);
\draw[style=ordedge] (4) -- (6);
\draw[style=ordedge] (4) -- (7);
\draw[style=ordedge] (6) -- (9);
\draw[style=ordedge] (7) -- (9);
\draw[style=ordedge] (4) -- (6);
\draw[style=ordedge] (5) -- (8);
\draw[style=ordedge] (8) -- (9);
\end{tikzpicture}
  &
\begin{tikzpicture}
  \node[node distance=20pt] (lsep) {};
  \node[right of=lsep,node distance=20pt] (rsep) {};
  
  \draw[->,very thick] (lsep) -- (rsep) node [auto, midway] {$\begin{array}{c} \text{for } x_3\leftarrow x_4 \\ I_{x_7}\end{array}$};
\end{tikzpicture}
  &
\begin{tikzpicture}[node distance=40pt]
  \node[shape=circle,draw,fill,label=right:$x_2$,scale=0.5] (3) {};
  \node[shape=circle,draw,fill,label=right:$x_4$,scale=0.5, below of=3] (5) {};
  \node[shape=circle,draw,fill,label=left:$x_3$,scale=0.5, below of=5] (4) {};
  \node[shape=circle,draw,fill,label=right:$x_6$,scale=0.5, below of=4] (7) {};
  \node[shape=circle,draw,fill,label=left:$x_5$,scale=0.5, left of=7] (6) {};
  \node[shape=circle,draw,fill,label=below:$x_8$,scale=0.5, below of=7] (9) {};
  \node[left of=6, node distance=30pt] (b1) {};
  
\draw[style=ordedge] (1) -- (3);
\draw[style=ordedge] (5) -- (4);
\draw[style=ordedge] (3) -- (5);
\draw[style=ordedge] (4) -- (6);
\draw[style=ordedge] (4) -- (7);
\draw[style=ordedge] (6) -- (9);
\draw[style=ordedge] (7) -- (9);
\draw[style=ordedge] (4) -- (6);
\end{tikzpicture}
    &
\begin{tikzpicture}
\node[node distance=20pt] (lsep) {};
\node[right of=lsep,node distance=20pt] (rsep) {};

\draw[->,very thick] (lsep) -- (rsep) node [auto, midway] {$I_{x_5}$};
\end{tikzpicture}
&
\begin{tikzpicture}[node distance=40pt]
\node (a5) {$\dots$};
\end{tikzpicture}\\
$\Psi = 1$
  &
  &
    $\displaystyle \Psi' = \int_0^{x_2} \Psi \dd x_1$
  &
  &
\multicolumn{2}{l}{
    $\displaystyle\Psi'' = \begin{array}{ll}  & \Psi'_{x_3\prec x_4}\\ + &
                            \Psi'_{x_4 \prec x_3}\\
                          \end{array}$}
  &
  
    \multicolumn{2}{l}{
    $\displaystyle
    \Psi''' = \int_{x_4}^{x_8} \Psi''_{x_4 \prec x_3} \mathrm{d}x_7$}
  &
\end{tabular}
}
\end{center}
\end{example}

The DAG to decompose (on the left) is of size $8$ with nodes
$x_1,\ldots,x_8$. The decomposition is non-deterministic, multiple rules apply, e.g. we
could ``consume'' the node $x_7$ with the (I) rule. Also, the (S)plit rule is always enabled.
 In the example, we decide to first remove the node $x_1$ by an application of
the (T) rule. We then show an application of the (S)plit rule for the incomparable nodes $x_3$ and $x_4$.  
The decomposition should then be performed on two distinct DAGs: one for $x_3 \prec x_4$ and the
other one for $x_4 \prec x_3$. We illustrate the second choice, and we further eliminate the nodes $x_7$ then $x_5$ using
 the (I) rule, etc. Ultimately all the DAGs are decomposed and we obtain the following integral computation:
\begin{scriptsize}
\begin{align*}
	\Psi & = \int_{x_2=0}^1 \int_{x_4=x_2}^1 \int_{x_3=x_4}^1 \int_{x_6=x_3}^1 \int_{x_8=x_6}^1 \int_{x_5= x_3}^{x_8} \int_{x_7=x_4}^{x_8} \\
		& \qquad   \left( \mathbbm{1}_{\mid x_4 \prec x_3} \cdot \int_{x_1=0}^{x_2} 1\cdot \dd x_1 
			+ \mathbbm{1}_{\mid x_3 \prec x_4}\cdot \int_{x_1=0}^{x_2}  1\cdot  \dd x_1 \right)  \dd x_7 \dd x_5 \dd x_8 \dd x_6 \dd x_3 \dd x_4 \dd x_2 
	 = \frac{8 + 6}{8!}. 
\end{align*}
\end{scriptsize}
The result means that there are exactly 14 distinct linear extensions in the example Poset.

\subsection{Embedding in the hypercube: the order polytope}
\label{sec:hypercube}

The justification of our decomposition scheme is based on the continuous embedding of posets
into the hypercube, as investigated in~\cite{stanley86}.

\begin{definition}[order polytope]
\begin{sloppypar}
  Let $P = (E, \prec)$ be a poset of size $n$.
  Let $C$ be the unit hypercube defined by $C = \left\{(x_1, \dots,
  x_n) \in \mathbb{R}^n ~ | ~ \forall i, ~0 \leq x_i \leq 1 \right\}$.
  For each constraint $ x_i \prec x_j \in P$ we define the convex subset
  $S_{i,j} = \left\{(x_1, \dots, x_n) \in \mathbb{R}^n ~ | ~ x_i \leq x_j
  \right\}$, i.e. one of the half spaces obtained by cutting
  $\mathbb{R}^n$ with the hyperplane $\left\{(x_1, \dots, x_n) \in
  \mathbb{R}^n ~ | ~ x_i - x_j = 0 \right\}$.
  Thus, the order polytope $C_P$ of $P$ is:
  $$ C_p = \bigcap_{x_i \prec x_j \in P} S_{i,j} \cap C$$
\end{sloppypar}
\end{definition}


Each linear extension, seen as total orders, can similarly be
embedded in the unit hypercube. Then, the order polytopes of the linear
extensions of a poset $P$ form a partition of the
Poset embedding $C_p$ as illustrated in Figure~\ref{fig:hypercube}.

\begin{figure}[ht]
\begin{center}
\newcommand{\Length}{-1}
\begin{tabular}{c | c | c}
\begin{tikzpicture}

\coordinate (O) at (0,0,0);
\coordinate (A) at (0.41*\Length,0.71*\Length,0.58*\Length);
\coordinate (B) at (-0.41*\Length,0.71*\Length,1.15*\Length);
\coordinate (C) at (-0.82*\Length,0,0.58*\Length);
\coordinate (D) at (0.41*\Length,-0.71*\Length,0.58*\Length);
\coordinate (E) at (0.82*\Length,0,1.15*\Length);
\coordinate (F) at (0,0,1.73*\Length);
\coordinate (G) at (-0.41*\Length,-0.71*\Length,1.15*\Length);

\draw[black,dashed] (O) -- (D);
\draw[black,dashed] (A) -- (O) -- (C);
\draw[black] (G) -- (D) -- (E) -- (A);
\draw[black] (F) -- (E);
\draw[black] (B) -- (F);
\draw[black] (B) -- (C);
\draw[black] (A) -- (B);
\draw[black] (C) -- (G);
\draw[black] (F) -- (G);

\node[below left of=A, node distance=8pt] (a) {$C_{(0,1,0)}$};
\node[right of=B, node distance=18pt] (a) {$B_{(1,1,0)}$};
\node[right of=C, node distance=18pt] (a) {$A_{(1,0,0)}$};
\node[left of=O, node distance=18pt] (a) {$O_{(0,0,0)}$};
\node[left of=D, node distance=18pt] (a) {$E_{(0,0,1)}$};
\node[left of=E, node distance=18pt] (a) {$D_{(0,1,1)}$};
\node[right of=F, node distance=18pt] (a) {$G_{(1,1,1)}$};
\node[right of=G, node distance=18pt] (a) {$F_{(1,0,1)}$};

\node[below of=C, node distance=24pt] (b) {};
\end{tikzpicture} \hspace*{0.5cm}
&
\hspace*{0.5cm} \begin{tikzpicture}

\coordinate (O) at (0,0,0);
\coordinate (A) at (0.41*\Length,0.71*\Length,0.58*\Length);
\coordinate (B) at (-0.41*\Length,0.71*\Length,1.15*\Length);
\coordinate (C) at (-0.82*\Length,0,0.58*\Length);
\coordinate (D) at (0.41*\Length,-0.71*\Length,0.58*\Length);
\coordinate (E) at (0.82*\Length,0,1.15*\Length);
\coordinate (F) at (0,0,1.73*\Length);
\coordinate (G) at (-0.41*\Length,-0.71*\Length,1.15*\Length);

\draw[black,dashed] (O) -- (D);
\draw[black,dashed] (A) -- (O) -- (C);
\draw[black] (G) -- (D) -- (E) -- (A);
\draw[black] (F) -- (E);
\draw[black] (B) -- (F);
\draw[black] (B) -- (C);
\draw[black] (A) -- (B);
\draw[black] (C) -- (G);
\draw[black] (F) -- (G);

\draw[blue, dashed] (O) -- (G);
\draw[blue, dashed] (O) -- (D);
\draw[blue, dashed] (O) -- (F);
\draw[blue] (D) -- (F) -- (G) -- cycle;

\node[left of=A, node distance=8pt] (a) {$C$};
\node[right of=B, node distance=6pt] (a) {$B$};
\node[right of=C, node distance=6pt] (a) {$A$};
\node[left of=O, node distance=8pt] (a) {$O$};
\node[left of=D, node distance=6pt] (a) {$E$};
\node[left of=E, node distance=6pt] (a) {$D$};
\node[right of=F, node distance=6pt] (a) {$G$};
\node[right of=G, node distance=6pt] (a) {$F$};

\node[below of=C, node distance=24pt] (b) {};
\end{tikzpicture}\hspace*{0.5cm}
&
\hspace*{0.5cm}\begin{tikzpicture}

\coordinate (O) at (0,0,0);
\coordinate (A) at (0.41*\Length,0.71*\Length,0.58*\Length);
\coordinate (B) at (-0.41*\Length,0.71*\Length,1.15*\Length);
\coordinate (C) at (-0.82*\Length,0,0.58*\Length);
\coordinate (D) at (0.41*\Length,-0.71*\Length,0.58*\Length);
\coordinate (E) at (0.82*\Length,0,1.15*\Length);
\coordinate (F) at (0,0,1.73*\Length);
\coordinate (G) at (-0.41*\Length,-0.71*\Length,1.15*\Length);

\draw[black,dashed] (O) -- (D);
\draw[black,dashed] (O) -- (C);
\draw[black] (G) -- (D) -- (E) -- (A);
\draw[black] (F) -- (E);
\draw[black] (B) -- (F);
\draw[black] (B) -- (C);
\draw[black] (A) -- (B);
\draw[black] (C) -- (G);
\draw[black] (F) -- (G);

\draw[blue, dashed] (O) -- (G);
\draw[blue, dashed] (O) -- (D);
\draw[blue, dashed] (O) -- (F);
\draw[blue] (D) -- (F) -- (G) -- cycle;

\draw[red, dashed] (D) -- (F) ;
\draw[red, dashed] (F) -- (E);
\draw[red] (F) -- (E) -- (D) ;
\draw[red, dashed, dash phase=15pt] (O) -- (F) ;
\draw[red, dashed, dash phase=15pt] (O) -- (D) ;
\draw[red, dashed] (O) -- (E) ;

\draw[green!80!black, dashed, dash phase=15pt] (O) -- (E);
\draw[green!80!black, dashed] (O) -- (A);
\draw[green!80!black, dashed] (O) -- (F);
\draw[green!80!black, dashed, dash phase=15pt] (E) -- (F);
\draw[green!80!black] (E) -- (A) -- (F);

\node[left of=A, node distance=8pt] (a) {$C$};
\node[right of=B, node distance=6pt] (a) {$B$};
\node[right of=C, node distance=6pt] (a) {$A$};
\node[left of=O, node distance=8pt] (a) {$O$};
\node[left of=D, node distance=6pt] (a) {$E$};
\node[left of=E, node distance=6pt] (a) {$D$};
\node[right of=F, node distance=6pt] (a) {$G$};
\node[right of=G, node distance=6pt] (a) {$F$};

\node[below of=C, node distance=24pt] (b) {};
\end{tikzpicture}\\

\end{tabular}

\end{center}
\caption{\label{fig:hypercube}
From left to right: the unit hypercube, the embedding of the total order $1 \prec 2 \prec 3$ and the embedding of the poset $P = (\{1,2,3\}, \{1 \prec 2 \})$ divided in its three linear extensions.
}
\end{figure}
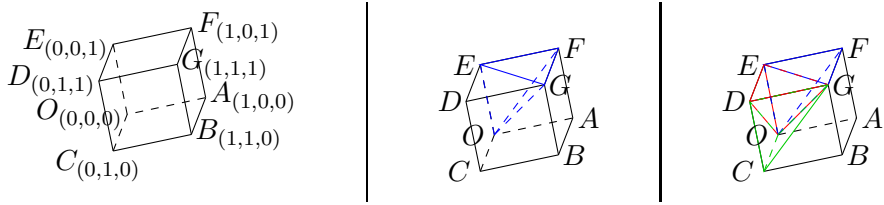

The number of linear extensions of a poset $P$, written $|\LE(P)|$, is then characterized as a volume in the embedding.

\begin{theorem}(\cite[Corollary 4.2]{stanley86})
\label{thm:stanley}
  Let $P$ be a Poset of size $n$ then its number of linear extensions $ |\LE(P)| = n! \cdot Vol(C_P)$
  where $Vol(C_P)$ is the volume, defined by the Lebesgue measure, of
  the order polytope $C_P$.
\end{theorem}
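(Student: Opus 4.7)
The plan is to decompose the order polytope $C_P$ as an essentially disjoint union of $|\LE(P)|$ congruent simplices, each of volume $1/n!$. Concretely, for each permutation $\sigma$ of $\{1,\ldots,n\}$ I would introduce the chamber
\[
  \Delta_\sigma = \{(x_1,\ldots,x_n) \in [0,1]^n : x_{\sigma(1)} \leq \cdots \leq x_{\sigma(n)}\}.
\]
A permutation of coordinates is a measure-preserving isometry of $\mathbb{R}^n$, sending $\Delta_\sigma$ to the standard simplex $\{0 \leq y_1 \leq \cdots \leq y_n \leq 1\}$, whose Lebesgue volume is classically $1/n!$. Moreover two distinct chambers $\Delta_\sigma$ and $\Delta_\tau$ overlap only on a null set: there is a pair $(i,j)$ they order oppositely, so any common point satisfies $x_i = x_j$ and thus lies on a hyperplane.

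The heart of the argument is then the identification (up to a null set)
\[
  C_P = \bigcup_{\sigma \in \LE(P)} \Delta_\sigma.
\]
The inclusion $\supseteq$ is immediate: if $\sigma \in \LE(P)$ and $i \prec j$ in $P$, then $\sigma^{-1}(i) < \sigma^{-1}(j)$, so any point of $\Delta_\sigma$ automatically satisfies $x_i \leq x_j$ and thus lies in $C_P$. For the converse I would pick a point $(x_1,\ldots,x_n) \in C_P$ with pairwise distinct coordinates (the excluded set is a finite union of hyperplanes, hence Lebesgue-null), and let $\sigma$ be the unique permutation with $x_{\sigma(1)} < \cdots < x_{\sigma(n)}$. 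For every relation $i \prec j$ in $P$, distinctness combined with the defining inequality $x_i \leq x_j$ of $C_P$ forces $x_i < x_j$, hence $\sigma^{-1}(i) < \sigma^{-1}(j)$; this is precisely the statement that $\sigma$ is a linear extension of $P$, and by construction the point lies in $\Delta_\sigma$.

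Given the decomposition, additivity of the Lebesgue measure yields
\[
  Vol(C_P) = \sum_{\sigma \in \LE(P)} Vol(\Delta_\sigma) = \frac{|\LE(P)|}{n!},
\]
which is equivalent to the claimed formula. The main obstacle is the ``sorting'' half of the identification: one must verify that a generic point of $C_P$, once its coordinates are sorted, yields a permutation that truly respects the partial order, and that the exceptional set where coordinates tie has measure zero. Both are elementary, but they constitute the real bridge between the continuous geometry of $C_P$ and the combinatorics of $\LE(P)$.
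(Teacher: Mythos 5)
Your proof is correct and follows exactly the route the paper (via Stanley) takes: the paper does not spell out a proof but appeals to the partition of $C_P$ into the isometric simplices of its linear extensions, each of volume $1/n!$ — precisely the chamber decomposition you carry out, including the measure-zero overlap and the sorting argument. Nothing to add.
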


The integral formula introduced in the BITS-decomposition corresponds to the computation of $Vol(C_p)$,
 hence we may now give the key-ideas of Theorem~\ref{thm:bitc:integral}.
\begin{proof}[Theorem~\ref{thm:bitc:integral}, sketch]
  We begin with the (S)-rule. Applied on two incomparable elements $x$ and $y$, the rule partitions 
  the polytope in two regions: one for $x \prec y$ and the other for
   $y \prec x$. Obviously, the respective volume of the two disjoint regions must be added.\\
  We focus now on the (I)-rule.
  In the context of Lebesgue integration, the classic Fubini's theorem allows to compute
  the volume $V$ of a polytope $P$ as an iteration on integrals along each dimension,
  and this in all possible orders, which gives the confluence property. Thus,
\begin{footnotesize}
  \[  V = \int_{[0,1]^n} \mathbbm{1}_P(\mathbf{x}) \dd \mathbf{x}= \int_{[0,1]} \dots \int_{[0,1]} \mathbbm{1}_P((x, y, z, \dots)) \dd x \dd y \dd z \dots, \]
 \end{footnotesize}
 $\mathbbm{1}_P$ being the indicator function of $P$ such that
 $
   \displaystyle{ \mathbbm{1}_P((x, y, z, \dots)) = \prod_{\alpha \text{ actions}} \mathbbm{1}_{P_\alpha}(\alpha),}
  $
  with $P_\alpha$ the projection of $P$ on the dimension associated to $\alpha$. By convexity of
  $P$, the function $\mathbbm{1}_{P_y}$ is the indicator function of a segment $[x,z]$.
  So the following identity holds:
  $\int_P \mathbbm{1}_{P_y}(y) \dd y = \int_x^z \dd y$.
  Finally, the two other rules (T) and (B) are just special cases (taking $x=0$, alternatively $z=1$). \hfill\qed
\end{proof}

\begin{corollary}(Stanley~\cite{stanley86})
  The order polytope of a linear extension is a simplex
  and the simplices of the linear extensions are isometric,
  thus of the same volume.
\end{corollary}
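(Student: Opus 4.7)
The plan is to make explicit the shape of $C_L$ when $L$ is a linear extension, and then exhibit an obvious isometry between $C_L$ and $C_{L'}$ for any two linear extensions $L,L'$ of $P$. Since a linear extension $L$ is itself a total order on the $n$ elements, say $x_{\sigma(1)} \prec x_{\sigma(2)} \prec \cdots \prec x_{\sigma(n)}$ for some permutation $\sigma$ of $\{1,\dots,n\}$, its order polytope is
\[
C_L = \{(x_1,\dots,x_n) \in [0,1]^n : 0 \leq x_{\sigma(1)} \leq x_{\sigma(2)} \leq \cdots \leq x_{\sigma(n)} \leq 1\}.
\]
First I would check that this set has exactly $n+1$ vertices: for each $k \in \{0,1,\dots,n\}$, the point $v_k$ defined by $x_{\sigma(i)} = 0$ for $i \leq n-k$ and $x_{\sigma(i)} = 1$ for $i > n-k$. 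These are the only points of $C_L$ that saturate $n$ of the defining inequalities (chosen among $x_{\sigma(1)} \geq 0$, the chain inequalities $x_{\sigma(i)} \leq x_{\sigma(i+1)}$, and $x_{\sigma(n)} \leq 1$), and they are affinely independent in $\mathbb{R}^n$. Hence $C_L$ is the convex hull of $n+1$ affinely independent points, i.e.\ a simplex.

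Next I would compare $C_L$ and $C_{L'}$ for two linear extensions associated with permutations $\sigma$ and $\tau$. The map $\Phi_{\sigma,\tau} : \mathbb{R}^n \to \mathbb{R}^n$ that permutes coordinates by $\Phi_{\sigma,\tau}(x)_{\tau(i)} = x_{\sigma(i)}$ is a permutation of the standard basis, hence an element of the orthogonal group $O(n)$ and in particular an isometry for the Euclidean metric. A direct substitution in the defining inequalities shows $\Phi_{\sigma,\tau}(C_L) = C_{L'}$, giving the required isometry between the two simplices.

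The main subtlety, which I would check carefully, is just the vertex count: one needs to verify that any further intersection of facets collapses to one of the $v_k$ already listed and that no spurious vertex arises on the cube boundary. This is routine once the facet structure is written out, because the chain inequalities form a path so the only non-redundant saturation patterns are the $n+1$ prefixes described above. As an immediate consequence, since the $n!$ simplices $C_L$ indexed by linear extensions of the trivial (antichain) poset tile the unit cube up to a set of measure zero, each has volume $1/n!$, recovering in particular Theorem~\ref{thm:stanley} for the antichain and confirming the common value of $\mathrm{Vol}(C_L)$ asserted by the corollary.
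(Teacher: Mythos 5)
Your proof is correct. The paper itself offers no proof of this corollary --- it is imported verbatim from Stanley's paper \cite{stanley86} --- and your argument is essentially Stanley's standard one: the chain polytope $0 \le x_{\sigma(1)} \le \cdots \le x_{\sigma(n)} \le 1$ is the convex hull of the $n+1$ affinely independent $0/1$ ``suffix'' points, and any two such polytopes are related by a coordinate permutation, which is orthogonal and hence an isometry. The closing observation that the $n!$ simplices tile the cube, so each has volume $1/n!$, is a correct bonus (it re-derives Theorem~\ref{thm:stanley} for the antichain) but is not needed for the statement being proved.
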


\section{Uniform random generation of process executions}
\label{sec:randgen}

In this section we describe a generic algorithm for the uniform random
generation of executions of barrier synchronization processes. 
The algorithm is based on the BITS-decomposition and its embedding in the unit hypercube. It has two essential properties. First, it is directly working on the control graphs (equivalently on the corresponding poset), and thus does not require the explicit construction of the state-space of processes. Second, it generates possible executions of processes at random according to the uniform distribution. This is a guarantee that the sampling is not biased and reflects the actual behavior of the processes.

\algrenewcommand\algorithmicindent{1em}%
\begin{algorithm}[h]
  \caption{Uniform sampling of a simplex of the order polytope}
  \label{algo:randgen}
  \begin{algorithmic}[0]
    \Function{SamplePoint\footnotemark}{$\mathcal{I}=\int_a^b f(y_i) \,\dd y_i$}
    \State $C \gets \fun{eval}(\mathcal I)$ ~ ; ~ $U \gets \Unif(a,b)$
    \State $Y_i \gets$ the solution $t$ of $\int_a^t \frac{1}{C} f(y_i)\, \dd y_i= U$
    \If{$f$ is not a symbolic constant}
    \State \textsc{SamplePoint}$\left(f\{y_i \leftarrow
        Y_i\}\right)$
    \Else ~ \textbf{return} the $Y_i$'s
    \EndIf
    \EndFunction
  \end{algorithmic}
\end{algorithm}

\footnotetext{The Python/Sage implementation of the random sampler is available at the following location: \url{https://gitlab.com/ParComb/combinatorics-barrier-synchro/blob/master/code/RandLinExtSage.py}}

\comment{Review 2}
{Algorithm 1 is vague and need really much more explanations.}
{We totally agree with this comment. Thus, we completely reformulated the explanations of the algorithm, giving a lot more
details. In consequence, we had to remove some less commented parts of the paper so that we would fit the allowed 20 pages.

(Note that there is no further comment box beyond this point).}

\begin{sloppypar}
The starting point of Algorithm~\ref{algo:randgen} (cf. previous page) is a Poset over a set of points $\{x_1,\ldots,x_n\}$ (or equivalently its covering DAG). 
The decomposition scheme of Section~\ref{sec:bit} produces an integral formula $\mathcal{I}$ of the form $\int^1_0 F(y_n,\ldots,y_1) ~ \dd y_n\cdots \dd y_1$.
with $F$ a symbolic integral formula over the points $x_1,\ldots,x_n$. The $y$ variables represent a permutation of the poset points giving the order followed along the decomposition. Thus, the variable $y_i$ corresponds to the $i$-th removed point during the decomposition. 
We remind the reader that the evaluation of the formula $\mathcal{I}$ gives the number of linear extensions of the partial order.
Now, starting with the complete formula, the variables $y_1,~y_2,\ldots$ will be eliminated, in turn, in an ``outside-in'' way.
Algorithm~\ref{algo:randgen} takes place at the $i$-th step of the process. At this step, the considered formula is of the following form:
$$\int_a^b \underbrace{ \left (\int \cdots \int 1~\dd y_n \cdots \dd y_{i+1} \right )}_{f(y_i)}  \dd y_i.$$
Note that in the subformula $f(y_i)$ the variable $y_i$ may only occur (possibly multiple times) as an integral bound.
\end{sloppypar}

In the algorithm, the variable $C$ gets the result of the numerical computation of the integral $\mathcal{I}$ at the given step.
Next we draw (with $\Unif$) a real number $U$ uniformly at random between the integration bounds $a$ and $b$. Based on these two
intermediate values, we perform a numerical solving of variable $t$ in the integral formula corresponding to the \emph{slice} of the 
polytope along the hyperplan $y_i=U$. The result, a real number between $a$ and $b$, is stored in variable $Y_i$. The justification of
this step is further discussed in the proof sketch of Theorem~\ref{thm:randgen} below.

If there remains integrals in $\mathcal{I}$, the algorithm is applied recursively by substituting the variable $y_i$ in the integral bounds of $\mathcal{I}$ by the numerical value $Y_i$. If no integral remains, all the computed values $Y_i$'s are returned. As illustrated in Example~\ref{ex:randgen} below, this allows to select a specific linear extension in the initial partial ordering.  The justification of the algorithm is given by the following theorem.
\begin{theorem}\label{thm:randgen}
  Algorithm~\ref{algo:randgen} uniformly samples a point of the
  order polytope with a $\mathcal{O}(n)$ complexity in the number of
  integrations.
\end{theorem}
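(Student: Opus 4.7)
The plan is to separate correctness (uniformity on $C_P$) from the complexity count, and to argue correctness by induction on the number of variables that remain to be sampled, viewing the algorithm as the classical inverse-transform method applied successively to the marginal densities of the coordinates.

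For correctness, the key observation is that, thanks to Theorem~\ref{thm:bitc:integral} and the Fubini argument used to justify it, the symbolic formula $\mathcal{I}$ that is held at the $i$-th recursive invocation is exactly the Lebesgue volume of the slice $C_P \cap \{y_1 = Y_1, \ldots, y_{i-1} = Y_{i-1}\}$ of the order polytope obtained by fixing the previously drawn coordinates. Writing $\mathcal{I} = \int_a^b f(y_i)\, \dd y_i$, the function $f(y_i)$ is therefore the volume of the further slice in which $y_i$ is also fixed. Hence $f(y_i)/C$ is the marginal density of the $i$-th coordinate of a point sampled uniformly from the current slice. Solving $\int_a^t \frac{1}{C} f(y_i)\, \dd y_i = U$ with $U$ uniform is precisely the inverse-transform sampling for this marginal, so $Y_i$ has the correct marginal distribution. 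The inductive hypothesis applied to the recursive call then ensures that $(Y_{i+1}, \ldots, Y_n)$ are sampled uniformly from the slice where $y_i = Y_i$ is additionally fixed. By the chain rule for densities, the joint law of $(Y_1, \ldots, Y_n)$ is uniform on $C_P$.

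The main technical point, and the step I expect to require the most care, is verifying that the substitution $y_i \leftarrow Y_i$ in the integral formula indeed produces the integral formula associated with the restricted (sliced) polytope. This has to be read off the shape of the formula produced by the BITS-decomposition: the variable $y_i$ introduced at step $i$ occurs only as an integration bound in the inner integrals (it encodes covering constraints $y_i \prec y_j$ or $y_j \prec y_i$), so literal substitution of the numerical value of $Y_i$ transforms the formula for the current slice into the formula for the finer slice. Once this ``syntactic Fubini'' property is established, the inductive step goes through and the base case (no integral remaining, a symbolic constant) is trivially uniform on a single point.

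For the complexity, each recursive call performs a bounded number of numerical integrations: one to evaluate $C$ and one to invert the CDF that determines $Y_i$ (the latter being, for instance, by solving the polynomial equation $\int_a^t f(y_i)\, \dd y_i - CU = 0$ in $t$). Since each invocation consumes exactly one of the $n$ variables of the poset, the recursion has depth $n$, so the total number of integrations is $\mathcal{O}(n)$, as claimed.
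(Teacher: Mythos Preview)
Your proposal is correct and follows essentially the same approach as the paper's own proof sketch: both reduce the problem to successive slicing of the order polytope, identify $f(y_i)/C$ as the marginal density of the next coordinate, sample it by inverse transform, and proceed inductively, with the $\mathcal{O}(n)$ bound coming from the fact that each step eliminates exactly one variable. Your write-up is in fact more explicit than the paper's (you spell out the inverse-transform step, the chain rule for densities, and the ``syntactic Fubini'' property that justifies the substitution $y_i \leftarrow Y_i$), but the underlying argument is the same.
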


\begin{proof}[sketch] 
The problem is reduced to the uniform random sampling of a point $p$
in the order polytope. This is a classical problem about marginal densities that can be solved by slicing the polytope and evaluating
incrementally the $n$ continuous random variables associated to the coordinates of $p$.
More precisely, during the calculation of the volume of the polytope $P$, 
the last integration (of a monovariate polynomial $p(y)$) done from 0 to 1 corresponds to integrate the slices of $P$ according the last variable $y$.  
So, the polynomial $p(y)/\int_0^1 p(y)dy$ is nothing but the density function of the random variable $Y$ corresponding to the value of $y$.
Thus, we can generate $Y$ according to this density and fix it. When this is done,
we can inductively continue with the previous integrations to draw all 
the random variables associated to the coordinates of $p$.
The linear complexity of Algorithm~\ref{algo:randgen} follows from the fact that each partial integration deletes exactly one variable (which corresponds to one node). Of course at each step a possibly costly computation of the counting formula is required. \hfill \qed
\end{proof}

We now illustrate the sampling process based on Example~\ref{ex:bitc} (page~\pageref{ex:bitc}).

\begin{example}\label{ex:randgen}
  First we assume that the whole integral formula has already been
  computed. To simplify the presentation we only consider 
  (S)plit-free DAGs \emph{i.e.} decomposable without the
  (S) rule. Note that it would be easy to deal with the
  (S)plit rule: it is sufficient to uniformly choose one of the DAG
  processed by the (S)-rule w.r.t. their number of linear extensions.
  
  Thus we will run the example on the DAG of Example~\ref{ex:bitc}
  where the DAG corresponding to ``$x_4 \prec x_3$'' as been randomly chosen (with probability
  $\frac{8}{14}$) \emph{i.e.} the following formula holds:
  \begin{equation*}\label{eq:randgen}
    \int_0^1 \left (\int_{x_2}^1 \int_{x_4}^1 \int_{x_3}^1
    \int_{x_6}^1 \int_{x_4}^{x_8} \int_{x_3}^{x_8} \int_0^{x_2} \dd
    x_1 \dd x_5 \dd x_7 \dd x_8 \dd x_6 \dd x_3 \dd x_4 \right ) \dd x_2  = \frac{8}{8!}.
  \end{equation*}
In the equation above, the sub-formula between parentheses would be denoted by $f(x_2)$ in the
explanation of the algorithm.
  Now, let us apply the Algorithm~\ref{algo:randgen} to that formula
  in order to sample a point of the order polytope. In the first step the normalizing constant $C$ is equal to
  $\frac{8!}{8}$, we draw $U$ uniformly in $[0,1]$ and so we compute a
  solution of $\frac{8!}{8}\int_0^t \dots \dd x_2 = U$. That solution
  corresponds to the second coordinate of a the point we are
  sampling. And so on, we obtain values for each of the coordinates:
  \[
    \left\{
      \begin{array}{lllllll}
        X_1 = 0.064\dots, &\quad& X_2 = 0.081\dots,
        &\quad& X_3 = 0.541\dots, &\quad& X_4 = 0.323\dots,\\
        X_5 = 0.770\dots, &\quad& X_6 = 0.625\dots,
        &\quad& X_7 = 0.582\dots, &\quad& X_8 = 0.892\dots\\
      \end{array}
    \right.
  \]
  These points belong to a simplex of the order polytope. To find the
  corresponding linear extension we compute the rank of that vector
  \emph{i.e.} the order induced by the values of the coordinates
  correspond to a linear extension of the original DAG:
  \[(x_1,x_2,x_4,x_3,x_7,x_6,x_5,x_8).\]
This is ultimately the linear extension returned by the algorithm.
\end{example}



\section{Classes of processes that are BIT-decomposable\\(or not)}
\label{sec:subclasses}

Thanks to the BITS decomposition scheme, we can generate a counting formula
for any (deadlock-free) process expressed in the
barrier synchronization calculus, and derive from it a dedicated uniform random sampler.
However the (S)plit rule generates two summands, thus if we cannot find common calculations
between the summands the resulting formula can grow exponentially in the size of the concerned process.
If we avoid splits in the decomposition, then the counting formula remains of linear size.
This is, we think, a good indicator that the subclass of so-called ``BIT-decomposable'' processes
is worth investigating for its own sake. 
In this Section, we first give some illustrations of the expressivity of this subclass,
and we then study the question of what it is to be \emph{not} BIT-decomposable. 
By lack of space, the discussion in this Section remains rather informal with very rough proof sketches,
and more formal developments are left for a future work.
Also, the first two subsections are extended results based on previously published papers
(respectively~\cite{parco:csr2017} and \cite{parco-arch-aofa18}).

\subsection{From tree Posets to fork-join parallelism}
\label{sec:fj}

If the control-graph of a process is decomposed with only the B(ottom) rule
(or equivalently the T(op) rule), then it is rather easy to show that its
shape is that of a \emph{tree}.
These are processes that cannot do much beyond forking sub-processes.
For example, based on our language of barrier synchronization
it is very easy to encode e.g. the (rooted) binary trees:
$$T ::= 0 \mid \alpha.(T \parallel T) \quad \text{or e.g.} \quad T ::= 0 \mid {\nu}B~\left(\alpha.\langle B \rangle 0 \parallel \langle B \rangle T \parallel \langle B \rangle T \right)$$
The good news is that the combinatorics on trees
is well-studied. In the paper~\cite{BGP13} we provide a thorough study of such processes,
and in particular we describe very efficient counting and uniform random generation algorithms.
Of course, this is not a very interesting sub-class in terms of concurrency.

\begin{table}
\begin{center}
\begin{tabular}{cm{8pt}cm{8pt}cm{8pt}cm{8pt}c}
$\infer{\sigma \entails{FJ} 0}{}$ && $\infer{\sigma \entails{FJ} \alpha.P}{\sigma \entails{FJ} P}$ && $\infer{\sigma \entails{FJ} P \parallel Q}{\sigma \entails{FJ} P & \sigma \entails{FJ} Q}$
&& $\infer{\sigma \entails{FJ} \nu(B)~P}{B{::}\sigma \entails{FJ} P}$ && $\infer{B{::}\sigma \entails{FJ} \langle B \rangle.P}{\sigma \entails{FJ} P}$
\end{tabular}
\end{center}

\caption{\label{tab:fjproof} A proof system for fork-join processes.}
\end{table}

Thankfully, many results on trees generalize rather straightforwardly to \emph{fork-join parallelism},
a sub-class we characterize inductively in Table~\ref{tab:fjproof}.
Informally, this proof system imposes that processes use their synchronization barriers according
to a \emph{stack discipline}. When synchronizing, only the last created barrier is available,
which exactly corresponds to the traditional notion of a \emph{join} in concurrency.
Combinatorially, there is a correspondence between these processes and the class
of \emph{series-parallel Posets}. In the decomposition both the (B) and the (I) rule are needed,
but following a tree-structured strategy. Most (if not all) the interesting questions about such
partial orders can be answered in (low) polynomial time.

\begin{theorem}[cf.~\cite{parco:csr2017}]
For a fork join process of size $n$ the counting problem is of time complexity $O(n)$ and
we developed a bit-optimal uniform random sampler with time complexity $O(n\sqrt{n})$ on average.
\end{theorem}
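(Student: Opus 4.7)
The plan is to exploit the structural correspondence between fork-join processes and series-parallel posets. A routine induction on the proof system of Table~\ref{tab:fjproof} shows that the control graph $\PO(P)$ of any fork-join process $P$ is built from singletons using two operations only: sequential composition $P;Q$ (induced by prefixing $\alpha.P$) and parallel composition $P \parallel Q$ matched by its enclosing barrier (thanks to the stack discipline forced by the $\nu(B)/\langle B\rangle$ rules). Hence $P$ admits a binary \emph{decomposition tree} of linear size, computable by a single traversal of its syntax.

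For the counting problem, I would apply the classical recursion for linear extensions of series-parallel posets,
$$|\LE(P;Q)| = |\LE(P)|\cdot|\LE(Q)|, \qquad |\LE(P\parallel Q)| = \binom{|P|+|Q|}{|P|}\cdot |\LE(P)|\cdot|\LE(Q)|,$$
computed bottom-up over the decomposition tree in $O(n)$ arithmetic operations. Keeping the count as a fraction $n!/D$, where $D$ accumulates the factors produced at the parallel nodes (a hook-length style invariant), avoids blow-up in the numerators at each step and matches the simplification that Theorem~\ref{thm:bitc:integral} would predict on a BIT-decomposable DAG.

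For the sampler I would translate the same recursion into a generator: a uniform extension of $P;Q$ is the concatenation of uniform extensions of $P$ and of $Q$, and a uniform extension of $P\parallel Q$ is obtained by recursively sampling extensions of $P$ and $Q$ and then merging them along a uniformly random north-east lattice path from $(0,0)$ to $(|P|,|Q|)$. Correctness is immediate because shuffles are in uniform bijection with lattice paths, and global uniformity follows by induction on the tree.

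The real difficulty, which governs both the bit-optimality claim and the $O(n\sqrt{n})$ average running time, is the merging step. A single merge has entropy $\log_2\binom{|P|+|Q|}{|P|}$, so a bit-optimal procedure must consume essentially that many random bits on average. I would therefore emit each lattice step by a Bernoulli trial whose parameter is the current ratio of remaining letters, implemented in the spirit of arithmetic coding so that unused entropy is carried over between successive steps. A local central limit estimate for the binomial then gives expected cost $O(\sqrt{|P|+|Q|})$ per merge; summing this contribution over the decomposition tree yields the advertised $O(n\sqrt{n})$ average bound, which is the analysis worked out in~\cite{parco:csr2017}.
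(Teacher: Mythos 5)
The paper itself gives no proof of this theorem: it is imported wholesale from~\cite{parco:csr2017}, and the surrounding text only remarks that fork-join processes correspond to series-parallel posets and are decomposable with the (B) and (I) rules ``following a tree-structured strategy.'' Your reconstruction is exactly that intended route: the stack discipline of Table~\ref{tab:fjproof} yields a linear-size series/parallel decomposition tree, the product and binomial recurrences for $|\LE(\cdot)|$ give the $O(n)$ count, and the sampler is the recursive shuffle-merge. So the approach is the right one and consistent with the paper.

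One point in your complexity accounting deserves correction, though. You attribute the $O(n\sqrt{n})$ average bound to an $O(\sqrt{|P|+|Q|})$ entropy cost per merge, summed over the tree. But the dominant cost of the recursive method is not the random-bit consumption of each Bernoulli sequence; it is the $\Theta(|P|+|Q|)$ work of actually producing the merged word at each parallel node. The total is therefore $\sum_{v}\lvert\text{subtree}(v)\rvert$, the internal path length of the decomposition tree, whose expectation over uniformly random trees of size $n$ is $\Theta(n\sqrt{n})$ --- that is where the $\sqrt{n}$ comes from. (Summing $\sqrt{m}$ instead of $m$ over the subtrees of a random Catalan tree gives only $O(n\log n)$, so your accounting would actually predict a \emph{better} bound than the one claimed, which is a sign it is not measuring the binding cost.) The entropy/arithmetic-coding argument you sketch is what delivers the \emph{bit-optimality} claim --- the expected number of random bits consumed is within $O(1)$ or $o(\cdot)$ of $\log_2|\LE(P)|$ --- but it is a separate ingredient from the $O(n\sqrt{n})$ time bound. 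With that reattribution, your proof matches the cited analysis.
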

%

\subsection{Asynchronism with promises}
\label{sec:promise}

We now discuss another interesting sub-class of processes
that can also be characterized inductively on the syntax of our process calculus, but
this time using the three BIT-decomposition rules (in a controlled manner).
The strict stack discipline of fork-join processes imposes a form of \emph{synchronous} behavior:
all the forked processes must terminate before a join may be performed.
To support a limited form of \emph{asynchronism}, a basic principle is to introduce \emph{promise} processes.

\begin{table}
\begin{tabular}{cm{8pt}cm{8pt}cm{8pt}c}
$\infer{\emptyset \entails{ctrl} 0}{}$ && $\infer{\pi \entails{ctrl} \alpha.P}{\pi \entails{ctrl} P}$  && $\infer{\pi \cup\{B\} \entails{ctrl} \langle B \rangle.P}{\pi \entails{ctrl} P}$  
&& $\infer{\pi \entails{ctrl} \nu(B)\left ( P \parallel Q\right )}{B\notin \pi & \pi \cup \{B\} \entails{ctrl} P & Q \uparrow_B}$  
\end{tabular}

\vspace{8pt}
with $Q \uparrow_B$ iff $Q\equiv \alpha.R$ and $R\uparrow_B$ or $Q\equiv \langle B \rangle.0$
\vspace{8pt}
\caption{\label{tab:pmproof} A proof system for promises.}
\end{table}

In Table~\ref{tab:pmproof} we define a simple inductive process structure composed as follows.
A \emph{main control thread} can perform atomic actions (at any time), and
also fork a sub-process of the form $\nu(B)\left ( P \parallel Q\right )$ but with a strong restriction:
\begin{itemize}
	\item a single barrier $B$ is created for the sub-processes to interact.
	\item the left sub-process $P$ must be the continuation of the main control thread,
	\item the right sub-process $Q$ must be a promise, which can only perform a sequence of atomic actions and ultimately synchronize with the control thread.
\end{itemize}

We are currently investigating this class as a whole, but we already obtained interesting results for the
\emph{arch-processes} in~\cite{parco-arch-aofa18}. An arch-process follows the constraint of
Table~\ref{tab:pmproof} but adds further restrictions.
The main control thread can still spawn an arbitrary number of promises, however there
must be two separate phases for the synchronization. After the first promise synchronizes, the main control thread cannot spawn any new promise. In~\cite{parco-arch-aofa18} a supplementary constraint is added (for the sake of algorithmic efficiency): 
each promise must perform exactly one atomic action, and the control thread can
 only perform actions when all the promises are running. In this paper, we remove this rather artificial constraint considering a larger, and more
useful process sub-class.

\begin{figure}[ht]
\begin{tabular}{c|c}
\begin{tikzpicture}[xscale=0.35, yscale=0.35] 
  
  \node (a1) at (0,5) {$\;\;\;\bullet^{a_1}_{\phantom{r}}$};
  \draw[->,>=latex,semithick] (a1) arc[radius=5, start angle=90, end angle=100] node (a11) {$\prescript{a_{1,1}}{}\bullet_{\;\;\;\;\;\;\;}$};
  \draw[dotted,->,>=latex,semithick] (a11) arc[radius=5, start angle=100, end angle=120] node (a1r) {$\prescript{a_{1,r_1}}{}\bullet_{\;\;\;\;\;\;\;}$};
  \draw[->,>=latex,semithick] (a1r) arc[radius=5, start angle=120, end angle=130] node (a2) {$\prescript{a_2 \;}{}\bullet_{\;\;\;\;}$};
  \draw[dotted,->,>=latex,semithick] (a2) arc[radius=5, start angle=130, end angle=160] node (al) {$\prescript{a_k}{}\bullet_{\;\;\;\;}$};
  \draw[->,>=latex,semithick] (al) arc[radius=5, start angle=160, end angle=170] node (ak1) {$\prescript{a_{k,1}}{}\bullet_{\;\;\;\;\;\;}$};
  \draw[dotted,->,>=latex,semithick] (ak1) arc[radius=5, start angle=170, end angle=190] node (akr) {$\prescript{a_{k,r_k}}{}\bullet_{\;\;\;\;\;\;\;\;}$};
  \draw[->,>=latex,semithick] (akr) arc[radius=5, start angle=190, end angle=200] node (c1) {$\prescript{c_1 \;}{}\bullet_{\;\;\;\;}$};
  \draw[->,>=latex,semithick] (c1) arc[radius=5, start angle=200, end angle=210] node (c11) {$\prescript{c_{1,1}} {}\bullet_{\;\;\;\;\;\;}$};
 \draw[dotted,->,>=latex,semithick] (c11) arc[radius=5, start angle=210, end angle=230] node (c1t) {$\prescript{c_{1,t_1}\;}{}\bullet_{\;\;\;\;\;\;}$};
  \draw[->,>=latex,semithick] (c1t) arc[radius=5, start angle=230, end angle=240] node (c2) {$\prescript{\phantom{f}}{c_2\;}\bullet_{\;\;\;\;\;}$};
  \draw[dotted,->,>=latex,semithick] (c2) arc[radius=5, start angle=240, end angle=270] node (cl) {$\;\;\;\;\bullet^{c_k}$};
  
  \draw[-<,>=latex,semithick] (c1) arc[radius=9.7, start angle=-60, end angle=-34] node (b11) {};
  \draw[<-,>=latex,semithick] (c1) arc[radius=9.7, start angle=-60, end angle=-37] node (b1) {$\;\;\;\;\;\;\;\bullet_{b_{1,s_1}}$};
  \draw[dotted,semithick] (c1) arc[radius=9.7, start angle=-60, end angle=-25] node (b12) {$\;\;\;\;\;\;\;\bullet_{b_{1,1}}\;$};
  \draw[dotted,-<,>=latex,semithick] (c1) arc[radius=9.7, start angle=-60, end angle=-22] node (b121) {};
  \draw[semithick] (b12) arc[radius=9.7, start angle=-25, end angle=-10] node (a1) {};
  
  \draw[-<,>=latex,semithick] (c2) arc[radius=9.7, start angle=-20, end angle=-7] node (b21) {};
  \draw[<-,>=latex,semithick] (c2) arc[radius=9.7, start angle=-20, end angle=-10] node (b2) {$\;\;\;\;\;\;\;\bullet_{b_{2,s_2}}$};
  \draw[dotted,semithick] (c2) arc[radius=9.7, start angle=-20, end angle=20] node (b22) {$\;\;\;\;\;\;\bullet^{b_{2,1}}$};
  \draw[dotted,-<,>=latex,semithick] (c2) arc[radius=9.7, start angle=-20, end angle=21] node (b221) {};
  \draw[semithick] (b22) arc[radius=9.7, start angle=17, end angle=27] node (a2) {};  
  
  \draw[-<,>=latex,semithick] (cl) arc[radius=9.7, start angle=10, end angle=33] node (bl1) {};
  \draw[<-,>=latex,semithick] (cl) arc[radius=9.7, start angle=10, end angle=30] node (bl) {$\;\;\;\;\;\;\bullet_{b_{k,s_k}}$};
  \draw[dotted,semithick] (cl) arc[radius=9.7, start angle=10, end angle=50] node (bl2) {$\;\;\;\;\;\;\bullet^{b_{k,1}}$};
  \draw[dotted,-<,>=latex,semithick] (cl) arc[radius=9.7, start angle=10, end angle=51] node (bl21) {};
  \draw[semithick] (bl2) arc[radius=9.7, start angle=47, end angle=57] node (al) {};  

\end{tikzpicture}

\phantom{XXX}

&

\phantom{XXX}

        \begin{tikzpicture}[xscale=0.35, yscale=0.35] 
        \node (a1) at (0,5) {$\;\;\;\bullet^{a_1}_{\phantom{r}}$};
        \draw[->,>=latex,semithick] (a1) arc[radius=5, start angle=90, end angle=100] node (a11) {$\prescript{a_{1,1}}{}\bullet_{\;\;\;\;\;\;\;}$};
        \draw[dotted,->,>=latex,semithick] (a11) arc[radius=5, start angle=100, end angle=120] node (a1r) {$\prescript{a_{1,r_1}}{}\bullet_{\;\;\;\;\;\;\;}$};
        \draw[->,>=latex,semithick] (a1r) arc[radius=5, start angle=120, end angle=130] node (a2) {$\prescript{a_2 \;}{}\bullet_{\;\;\;\;}$};
        \draw[dotted,->,>=latex,semithick] (a2) arc[radius=5, start angle=130, end angle=160] node (al) {$\prescript{a_k}{}\bullet_{\;\;\;\;}$};
        \draw[->,>=latex,semithick] (al) arc[radius=5, start angle=160, end angle=170] node (ak1) {$\prescript{a_{k,1}}{}\bullet_{\;\;\;\;\;\;}$};
        \draw[dotted,->,>=latex,semithick] (ak1) arc[radius=5, start angle=170, end angle=190] node (akr) {$\prescript{a_{k,r_k}}{}\bullet_{\;\;\;\;\;\;\;\;}$};
        \draw[->,>=latex,semithick] (akr) arc[radius=5, start angle=190, end angle=200] node (c1) {$\prescript{c_1 \;}{}\bullet_{\;\;\;\;}$};
        \draw[->,>=latex,semithick] (c1) arc[radius=5, start angle=200, end angle=210] node (c11) {$\prescript{c_{1,1}} {}\bullet_{\;\;\;\;\;\;}$};
        \draw[dotted,->,>=latex,semithick] (c11) arc[radius=5, start angle=210, end angle=230] node (c1t) {$\prescript{c_{1,t_1}\;}{}\bullet_{\;\;\;\;\;\;}$};
        \draw[->,>=latex,semithick] (c1t) arc[radius=5, start angle=230, end angle=240] node (c2) {$\prescript{\phantom{f}}{c_2\;}\bullet_{\;\;\;\;\;}$};
        \draw[dotted,->,>=latex,semithick] (c2) arc[radius=5, start angle=240, end angle=270] node (cl) {$\;\;\;\;\bullet^{c_k}$};
        
        \draw[semithick] (c1) arc[radius=9.7, start angle=-60, end angle=-25] node (b12) {$\;\;\;\;\;\;\;\bullet_{b_{1,1}}\;$};
        \draw[dotted,-<,>=latex,semithick] (c1) arc[radius=9.7, start angle=-60, end angle=-22] node (b121) {};
        \draw[semithick] (b12) arc[radius=9.7, start angle=-25, end angle=-10] node (a1) {};

\node at (-2.6,1) {$\mathcal{P}$};

        \end{tikzpicture}
        \begin{tikzpicture}[xscale=0.35, yscale=0.35] 
        \node (a1) at (0,5) {$\;\;\;\bullet^{a_1}_{\phantom{r}}$};
        \draw[->,>=latex,semithick] (a1) arc[radius=5, start angle=90, end angle=100] node (a11) {$\prescript{a_{1,1}}{}\bullet_{\;\;\;\;\;\;\;}$};
        \draw[dotted,->,>=latex,semithick] (a11) arc[radius=5, start angle=100, end angle=120] node (a1r) {$\prescript{a_{1,r_1}}{}\bullet_{\;\;\;\;\;\;\;}$};
        \draw[->,>=latex,semithick] (a1r) arc[radius=5, start angle=120, end angle=130] node (a2) {$\prescript{a_2 \;}{}\bullet_{\;\;\;\;}$};
        \draw[dotted,->,>=latex,semithick] (a2) arc[radius=5, start angle=130, end angle=160] node (al) {$\prescript{a_k}{}\bullet_{\;\;\;\;}$};
        \draw[->,>=latex,semithick] (al) arc[radius=5, start angle=160, end angle=170] node (ak1) {$\prescript{a_{k,1}}{}\bullet_{\;\;\;\;\;\;}$};
        \draw[dotted,->,>=latex,semithick] (ak1) arc[radius=5, start angle=170, end angle=190] node (akr) {$\prescript{a_{k,r_k}}{}\bullet_{\;\;\;\;\;\;\;\;}$};
        \draw[->,>=latex,semithick] (akr) arc[radius=5, start angle=190, end angle=200] node (c1) {$\prescript{c_1 \;}{}\bullet_{\;\;\;\;}$};
        \draw[->,>=latex,semithick] (c1) arc[radius=5, start angle=200, end angle=210] node (c11) {$\prescript{c_{1,1}} {}\bullet_{\;\;\;\;\;\;}$};
        \draw[dotted,->,>=latex,semithick] (c11) arc[radius=5, start angle=210, end angle=230] node (c1t) {$\prescript{c_{1,t_1}\;}{}\bullet_{\;\;\;\;\;\;}$};
        \draw[->,>=latex,semithick] (c1t) arc[radius=5, start angle=230, end angle=240] node (c2) {$\prescript{\phantom{f}}{c_2\;}\bullet_{\;\;\;\;\;}$};
        \draw[dotted,->,>=latex,semithick] (c2) arc[radius=5, start angle=240, end angle=270] node (cl) {$\;\;\;\;\bullet^{c_k}$};

        \draw[semithick,blue] (cl) arc[radius=9.7, start angle=10, end angle=50] node (b12) {$\;\;\;\;\;\;\bullet^{b_{1,1}}$};
        \draw[-<,>=latex,semithick,blue] (cl) arc[radius=9.7, start angle=10, end angle=51] node (b121) {};
        \draw[semithick,blue] (cl) arc[radius=9.7, start angle=10, end angle=60] node (al) {};
        
        \draw[semithick,red] (cl) arc[radius=13, start angle=-22, end angle=13] node (b12) {$\prescript{b_{1,1}}{}\bullet\;\;\;\;\;$};
        \draw[dotted,-<,>=latex,semithick,red] (cl) arc[radius=13, start angle=-22, end angle=15] node (b121) {};
        \draw[semithick,red] (cl) arc[radius=13, start angle=-22, end angle=23] node (a1) {};
        
        \draw[semithick,green] (c1) arc[radius=2.2, start angle=-50, end angle=-12] node (b12) {$\;\;\;\;\;\;\bullet_{b_{1,1}}\;$};
        \draw[-<,>=latex,semithick,green] (c1) arc[radius=2.2, start angle=-50, end angle=10] node (b121) {};
        \draw[semithick,green] (c1) arc[radius=2.2, start angle=-50, end angle=50] node (a1) {};

\node[red] at (1.6,1) {$\mathcal{A}$};
\node[blue] at (-0.6,-1.4) {$\mathcal{B}$};
\node[green] at (-3.7, -1.4) {$\mathcal{C}$};
        \end{tikzpicture}

\end{tabular}

\caption{\label{fig:arch} The structure of an arch-process (left) and the inclusion-exclusion counting principle (right).}
\end{figure}
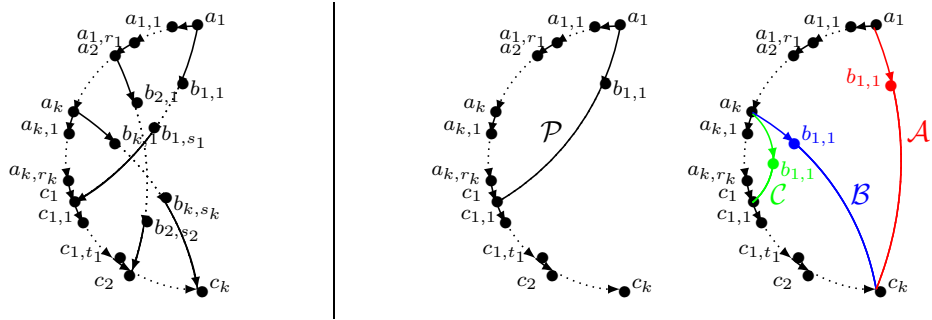

In Fig.~\ref{fig:arch} (left) is represented the general structure of a generalized arch-process.
The $a_i$'s actions are the promise forks, and the synchronization points are the $c_j$'s. 
The constraint is thus that all the $a_i$'s occur before the $c_j$'s. 

\begin{theorem}\label{theo:nb_exc_promise}
The number of executions of an arch-process can be calculated in $O(n^2)$ arithmetic operations,
using a dynamic programming algorithm based on memoization.    
\end{theorem}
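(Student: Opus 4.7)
The plan is to construct a two-parameter dynamic program that sweeps the main thread of the arch-process from left to right. Write $m$ for the length of the main thread (including the $k$ forks $a_1 < \dots < a_k$ and the $k$ syncs $c_1 < \dots < c_k$, with $a_k < c_1$), $s_i$ for the length of the $i$-th promise chain, $S = \sum_i s_i$ and $n = m + S$. First I would record the key structural observation that, at any position $p$ of the main thread, the set of currently active promises $\{i : \mathrm{pos}(a_i) \le p < \mathrm{pos}(c_i)\}$ is a contiguous interval $[L(p), R(p)]$ of indices; this follows from the monotonicity of the $a_i$'s and $c_i$'s together with the fact that every fork precedes every sync.

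Next, I would introduce a table $T(j, y)$ counting the interleavings of the first $j$ main-thread actions with exactly $y$ promise actions that respect the main-thread order, every promise chain, and all deadline constraints $b_{i, s_i} \prec c_i$ for the indices $i$ with $\mathrm{pos}(c_i) \le j$. The recurrence has the form
\[ T(j+1, y') = \sum_{y \le y'} T(j, y)\, K(j, y' - y), \]
where $K(j, d)$ counts the ways to schedule $d$ additional promise actions after main position $j$, drawn from the active window $[L(j), R(j)]$ in the per-promise order. This coefficient reduces to a multinomial factor obtained from precomputed binomials. At a fork $j+1 = \mathrm{pos}(a_l)$ the window gains promise $l$ and $R$ increases; at a sync $j+1 = \mathrm{pos}(c_l)$ we only keep entries for which the $s_l$ actions of promise $l$ have been completed, and $L$ moves to $l+1$. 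The number of linear extensions is then read off as $T(m, S)$.

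For the complexity, the table has $O(m)$ rows and $O(S)$ columns, i.e. $O(n^2)$ cells in the worst case. The $O(n^2)$ bound on arithmetic operations is reached by maintaining rolling sums across $y$, precomputing factorials and binomial coefficients once in a linear pass, and reusing the previous row to fill the next one, so that each cell is updated in amortized constant time.

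The main obstacle will be to argue that the scalar $y$ (total promise progress) is a sufficient summary of the per-promise progress when enforcing the individual deadlines: a priori one might need the whole vector $(r_L, r_{L+1}, \ldots, r_R)$ of remaining actions of each active promise. I would resolve this by refining the state to also record the number of actions already done by the \emph{oldest} active promise (the next to face a deadline), and then observing that within the current window the other active promises are symmetric as far as subsequent deadlines are concerned, so that their individual progresses factor out as a single multinomial coefficient at the moment their own sync event is reached. This refinement keeps the effective state space at $O(n^2)$, which together with the constant-time transitions above yields the claimed bound.
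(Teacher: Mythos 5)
Your route is genuinely different from the paper's: the paper does not sweep the main thread with a progress table, but instead treats the promises one at a time and expresses the contribution of each via an inclusion--exclusion identity $\ell_{\mathcal{P}} = \ell_{\mathcal{A}} - \ell_{\mathcal{B}} + \ell_{\mathcal{C}}$ over three ``virtual'' promises whose constraints are one-sided (a start point but no individual deadline, or a later start point), hence easy to count; memoizing these intermediate counts gives the $O(n^2)$ bound. That device exists precisely to get rid of the two-sided window constraint $a_i \prec b_{i,1}$, $b_{i,s_i} \prec c_i$ on each promise, and your proposal runs into exactly the difficulty it is designed to avoid.

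The gap is the one you flag yourself, and your proposed repair does not close it. The state $(j,y)$, even refined by the progress of the oldest active promise, is not a sufficient statistic, because the active promises are \emph{not} exchangeable: promise $i$ must be entirely completed before its own sync $c_i$, and these deadlines are distinct and ordered. Concretely, take three active promises with $s_1=s_2=s_3=2$, suppose promise $1$ is already finished and $y=3$ promise actions have been placed so far. The histories $(r_2,r_3)=(1,0)$ and $(r_2,r_3)=(0,1)$ land in the same DP cell, yet they admit different numbers of completions: in the first case promise $2$ has one action left to squeeze in before $c_2$, in the second it has two, while promise $3$'s deadline $c_3$ is strictly later, so the continuation counts differ and cannot be carried by a single table entry. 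The ``symmetry'' you invoke would hold only if all active promises shared a common deadline; here the constraints $b_{i,s_i}\prec c_i$ cross one another, and a faithful sweep state must record the whole vector of per-promise progress, which is exponential in the number of simultaneously active promises. (A secondary point: even granting the recurrence, $T(j+1,y')=\sum_{y\le y'}T(j,y)\,K(j,y'-y)$ is a convolution, and it is not clear the rolling-sum trick applies to your $K$, so the operation count would be $O(n^3)$ rather than $O(n^2)$.) To make a sweep-based argument work you would need some decoupling step playing the role of the paper's virtual promises.
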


\begin{proof}[idea]
A complete proof is provided in~\cite{parco-arch-aofa18} for ``simple'' arch-processes,
and the generalization is detailed in the companion document.
We only describe the \emph{inclusion-exclusion} principle on which
our counting algorithm is based. Fig.~\ref{fig:arch} (right) describes this principles
(we omit the representation of the other promises to obtain a clear picture of our approach).
Our objective is to count the number of execution contributed
by a single promise with atomic action $b_{1,1}$.
If we denote by $\ell_\mathcal{P}$ this contribution, 
we reformulate it as a combination
$\ell_\mathcal{P} = \ell_{\color{red}{\mathcal{A}}} - \ell_{\color{blue}{\mathcal{B}}} + \ell_{\color{green}{\mathcal{C}}}$
as depicted on the rightmost part of Fig.~\ref{fig:arch}.
First, we take the ``virtual'' promise $\mathcal{A}$ going from the starting point $a_1$ of $\ell_\mathcal{P}$
until the end point $c_{k}$ of the main thread.
Of course there are two many possibilities if we only keep $\textcolor{red}{\mathcal{A}}$.
An over-approximation of what it is to remove is the promise $\textcolor{blue}{\mathcal{B}}$
going from the start of the last promise (at point $a_k$) until the end.
But this time we removed too many possibilities, which corresponds to promise $\textcolor{green}{\mathcal{C}}$.
The latter is thus reinserted in the count. Each of these three ``virtual'' promises
have a simpler counting procedure. To guarantee the quadratic worst-time complexity (in the number of arithmetic operations),
we have to memoize the intermediate results. We refer to the companion document for further details. \hfill \qed
\end{proof}

From this counting procedure we developed a uniform random sampler
following the principles of the \emph{recursive method}, as described in~\cite{FZC94}.

\begin{theorem}\label{theo:random_gen_promise}
    Let $\mathcal{P}$ be a promise-process of size $n$ with $k\geq n$ promises.
    A random sampler of $O(n^4)$ time-complexity (in the number of arithmetic operations)
    builds uniform executions. 
\end{theorem}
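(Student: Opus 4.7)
The plan is to apply the classical \emph{recursive method} of Flajolet, Zimmermann and Van Cutsem~\cite{FZC94}, using Theorem~\ref{theo:nb_exc_promise} as the underlying counting oracle. At each step the sampler maintains a \emph{residual configuration}: the partial execution $\sigma = \langle \alpha_1,\ldots,\alpha_i \rangle$ already emitted, together with the sub-process of $\mathcal{P}$ that remains to be executed. The uniform distribution over completions of $\sigma$ is obtained by choosing the next action $\alpha_{i+1}$ among the currently enabled atomic actions, with probability proportional to the number of executions of the residual configuration that start with $\alpha_{i+1}$, and then recursing on the shorter residual.

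The first substep is to check that every residual configuration reachable by the sampler still belongs to the class of (generalised) arch-processes of Theorem~\ref{theo:nb_exc_promise}, so that the counting oracle is applicable. Firing an action on the main control thread merely shortens the outer arc ($a_1,\dots,a_k,c_1,\dots,c_k$), preserving the two-phase constraint that all promise forks $a_i$ precede all synchronisation points $c_j$. Firing an action inside a promise shortens that single promise, and once the synchronising action of a promise is fired, the promise vanishes. Thus the residual is always an arch-process with at most the initial number of promises, and its number of linear extensions can still be computed by the inclusion-exclusion scheme of Fig.~\ref{fig:arch} (right) in $O(n^2)$ arithmetic operations via the same memoised dynamic program.

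The second substep is the complexity count. The sampler emits exactly $n$ actions, and before emitting each one it considers at most $O(n)$ candidate enabled actions (one for the current head of the main thread plus one per still-active promise, bounded by the size~$n$). For each candidate the recursive method requires one evaluation of the count of completions, costing $O(n^2)$ arithmetic operations by Theorem~\ref{theo:nb_exc_promise}. Multiplying these three factors gives the announced $O(n^4)$ bound, and the uniformity of the resulting distribution is the standard correctness property of the recursive method: the probability of any fixed complete execution telescopes to $1/|\LE(\PO(\mathcal{P}))|$.

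The main obstacle lies in the bookkeeping that certifies substep one: one must describe the residual arch-process in a canonical form whose parameters (number of remaining promise forks, number of remaining synchronisations, per-promise remaining length, position of the control thread in the two phases) are exactly those indexing the memoisation table of the counting algorithm. This ensures that the call to the counting oracle at each iteration of the sampler hits a well-defined entry and that no extra re-parameterisation cost is incurred. Once this canonical residual representation is fixed, correctness is a routine application of the recursive method, and the detailed invariants are deferred to the companion document following the pattern of~\cite{parco-arch-aofa18}.
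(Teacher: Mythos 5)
Your overall recipe --- the recursive method of~\cite{FZC94} driven by the $O(n^2)$ counting oracle of Theorem~\ref{theo:nb_exc_promise}, with the complexity accounted as $n$ emitted actions $\times$ $O(n)$ candidates $\times$ $O(n^2)$ per count $= O(n^4)$ --- is the same strategy the paper follows, and the arithmetic matches the claimed bound. The instantiation, however, is genuinely different. The paper does not sample the next enabled action against counts of residual configurations; its sketch indicates that the recursive choices are the \emph{insertion positions} of each promise's action (the $b_{1,1}$ of Fig.~\ref{fig:arch}) into the main control thread, with a relative probability computed for each case. The reason given is specific to the inclusion--exclusion structure $\ell_{\mathcal{P}} = \ell_{\mathcal{A}} - \ell_{\mathcal{B}} + \ell_{\mathcal{C}}$: the subtracted term $\ell_{\mathcal{B}}$ corresponds to forbidden executions, and the paper stresses that eliminating them by rejection would be exponentially costly, whence the explicit case analysis. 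Your scheme buys simplicity and sidesteps rejection automatically (signed counts are harmless when you only use them to form exact transition probabilities), but it rests entirely on the claim you defer to ``bookkeeping'': that every residual configuration is again a generalized arch-process to which the $O(n^2)$ algorithm applies.

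That claim is the one load-bearing step you have not established, and it is not as routine as stated. An already-launched promise in a residual is \emph{not} forked at a node of the residual main thread: its remaining actions are incomparable to the whole prefix of the main thread up to its join point, whereas an arch-process promise forked at $a_i$ has all its actions ordered \emph{after} $a_i$. You need an explicit re-encoding --- e.g.\ a virtual root preceding both the residual main thread and all active promise heads, together with a check that the two-phase constraint (all forks before all joins) survives in both phases of the execution --- before the memoised counting algorithm can be invoked on the residual. Until that closure property is proved (or the counting algorithm is extended to this slightly larger class), the correctness and the per-step $O(n^2)$ cost of your sampler are not justified; this is precisely the point at which the paper's insertion-based decomposition avoids the issue.
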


The algorithm and the complete proof are detailed in the companion document.
   One notable aspect is that in order to get rid of the forbidden case
   of executions associated to the ``virtual'' promise $\textcolor{blue}{\mathcal{B}}$
   we cannot only do rejection (because the induced complexity would be exponential).
In the generalization of arch-processes, we proceed by case analysis: for each possibility for the insertion of $b_{1,1}$
   in the main control thread we compute the relative probability for the associated process $\mathcal{P}$. 
This explains the increase of complexity (from $O(n^2)$ to $O(n^4)$) if compared to \cite{parco-arch-aofa18}.

\subsection{BIT-free processes}
\label{sec:cf}

The class of BIT-decomposable processes is rather large, and we in fact only uncovered two
interesting sub-classes that can be easily captured inductively on the process syntax.
The relatively non-trivial process $Sys$ of Fig.~\ref{fig:example} is also interestingly BIT-decomposable.
We now adopt the complementary view of trying to understand the combinatorial structure of
 a so called ``BIT-free'' process, which is \emph{not} decomposable using only the (B), (I) and (T) rules.

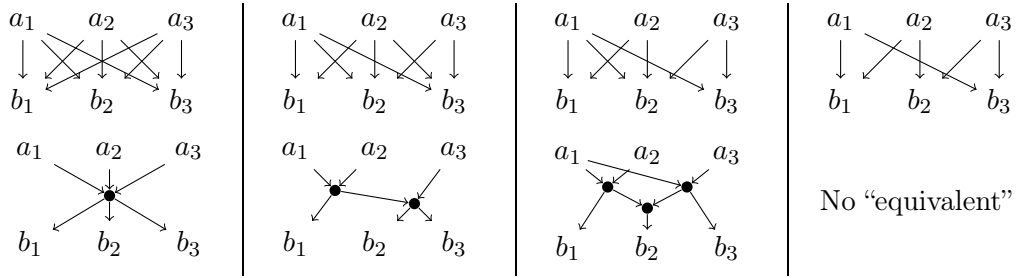
\begin{figure}[ht]
\begin{tabular}{c|c|c|c}
\begin{tikzpicture}[node distance=30pt]
\node (a1) {$a_1$};
\node[right of=a1] (a2) {$a_2$};
\node[right of=a2] (a3) {$a_3$};
\node[below of=a1, node distance=30pt] (b1) {$b_1$};
\node[right of=b1] (b2) {$b_2$};
\node[right of=b2] (b3) {$b_3$};
\draw[->] (a1) -- (b1);
\draw[->] (a1) -- (b2);
\draw[->] (a1) -- (b3);
\draw[->] (a2) -- (b1);
\draw[->] (a2) -- (b2);
\draw[->] (a2) -- (b3);
\draw[->] (a3) -- (b2);
\draw[->] (a3) -- (b3);
\draw[->] (a3) -- (b1);
\end{tikzpicture} \hspace{2pt} & \hspace{2pt} \begin{tikzpicture}[node distance=30pt]
\node (a1) {$a_1$};
\node[right of=a1] (a2) {$a_2$};
\node[right of=a2] (a3) {$a_3$};
\node[below of=a1, node distance=30pt] (b1) {$b_1$};
\node[right of=b1] (b2) {$b_2$};
\node[right of=b2] (b3) {$b_3$};
\draw[->] (a1) -- (b1);
\draw[->] (a1) -- (b2);
\draw[->] (a1) -- (b3);
\draw[->] (a2) -- (b1);
\draw[->] (a2) -- (b2);
\draw[->] (a2) -- (b3);
\draw[->] (a3) -- (b2);
\draw[->] (a3) -- (b3);
\end{tikzpicture}  \hspace{2pt} & \hspace{2pt} \begin{tikzpicture}[node distance=30pt]
\node (a1) {$a_1$};
\node[right of=a1] (a2) {$a_2$};
\node[right of=a2] (a3) {$a_3$};
\node[below of=a1, node distance=30pt] (b1) {$b_1$};
\node[right of=b1] (b2) {$b_2$};
\node[right of=b2] (b3) {$b_3$};
\draw[->] (a1) -- (b1);
\draw[->] (a1) -- (b2);
\draw[->] (a1) -- (b3);
\draw[->] (a2) -- (b1);
\draw[->] (a2) -- (b2);
\draw[->] (a3) -- (b2);
\draw[->] (a3) -- (b3);
\end{tikzpicture} \hspace{2pt} & \hspace{2pt}  \begin{tikzpicture}[node distance=30pt]
\node (a1) {$a_1$};
\node[right of=a1] (a2) {$a_2$};
\node[right of=a2] (a3) {$a_3$};
\node[below of=a1, node distance=30pt] (b1) {$b_1$};
\node[right of=b1] (b2) {$b_2$};
\node[right of=b2] (b3) {$b_3$};
\draw[->] (a1) -- (b1);
\draw[->] (a1) -- (b3);
\draw[->] (a2) -- (b1);
\draw[->] (a2) -- (b2);
\draw[->] (a3) -- (b2);
\draw[->] (a3) -- (b3);
\end{tikzpicture} \\
\begin{tikzpicture}[node distance=30pt]
\node (a1) {$a_1$};
\node[right of=a1] (a2) {$a_2$};
\node[right of=a2] (a3) {$a_3$};
\node[below of=a1, node distance=35pt] (b1) {$b_1$};
\node[right of=b1] (b2) {$b_2$};
\node[right of=b2] (b3) {$b_3$};
\node[above of=b2, node distance=18pt, shape=circle, fill, inner sep=1.5pt] (c1) {};
\draw[->] (a1) -- (c1);
\draw[->] (a2) -- (c1);
\draw[->] (a3) -- (c1);
\draw[->] (c1) -- (b1);
\draw[->] (c1) -- (b2);
\draw[->] (c1) -- (b3);
\end{tikzpicture} & \begin{tikzpicture}[node distance=30pt]
\node (a1) {$a_1$};
\node[right of=a1] (a2) {$a_2$};
\node[right of=a2] (a3) {$a_3$};
\node[below of=a1, node distance=35pt] (b1) {$b_1$};
\node[right of=b1] (b2) {$b_2$};
\node[right of=b2] (b3) {$b_3$};
\node[below of=a2, node distance=20pt] (cc1) {};
\node[right of=cc1, node distance=15pt, shape=circle, fill, inner sep=1.5pt] (c1) {};
\node[below of=a2, node distance=15pt] (cc2) {};
\node[left of=cc2, node distance=15pt, shape=circle, fill, inner sep=1.5pt] (c2) {};

\draw[->] (a1) -- (c2);
\draw[->] (a2) -- (c2);
\draw[->] (a3) -- (c1);
\draw[->] (c2) -- (c1);
\draw[->] (c2) -- (b1);
\draw[->] (c1) -- (b2);
\draw[->] (c1) -- (b3);
\end{tikzpicture} &  \begin{tikzpicture}[node distance=30pt]
\node[inner sep=2pt] (a1) {$a_1$};
\node[right of=a1,inner sep=2pt] (a2) {$a_2$};
\node[right of=a2,inner sep=2pt] (a3) {$a_3$};
\node[below of=a1, node distance=35pt,inner sep=2pt] (b1) {$b_1$};
\node[right of=b1,inner sep=2pt] (b2) {$b_2$};
\node[right of=b2,inner sep=2pt] (b3) {$b_3$};
\node[below of=a2, node distance=25pt] (cc1) {};
\node[below of=a2, node distance=12pt] (cc2) {};
\node[left of=cc2, node distance=15pt, shape=circle, fill, inner sep=1.5pt] (c2) {};
\node[right of=cc2, node distance=15pt, shape=circle, fill, inner sep=1.5pt] (c3) {};
\node[above of=b2, node distance=15pt, shape=circle, fill, inner sep=1.5pt] (c1) {};

\draw[->] (a1) -- (c2);
\draw[->] (a2) -- (c2);
\draw[->] (a1) -- (c3);
\draw[->] (a3) -- (c3);
\draw[->] (c2) -- (c1);
\draw[->] (c3) -- (c1);
\draw[->] (c2) -- (b1);
\draw[->] (c1) -- (b2);
\draw[->] (c3) -- (b3);
\end{tikzpicture} &
\begin{tikzpicture}
 \node (a) {};
 \node[below of=a, node distance=20pt] (n) {No ``equivalent''};
 \node[below of=n, node distance=20pt] (b) {};
 \end{tikzpicture}
\end{tabular}
\caption{\label{fig:crowns}Typical BIT-free substructures, and their BIT ``equivalent'' (when possible).}
\end{figure}

The BIT-free condition implies the occurrence of structures similar to the ones depicted on Fig.~\ref{fig:crowns}.
These structures are composed of a set of ``bottom'' processes (the $b_i$'s) waiting for ``top'' processes
 (the $a_j$') according to some synchronization pattern. We represent the whole possibilities
  of size 3 (up-to order-isomorphism) in the upper-part of the figure.
  The upper-left process is a complete (directed) bipartite graph, which can in fact be ``translated''
  to a BIT-decomposable process as seen on the lower-part of the figure. This requires the
  introduction of a single ``synchronization point'' between the two process groups. This
  transformation preserves the number of executions and is Poset-wise equivalent. At each step
   ``to the right'' of Fig.~\ref{fig:crowns}, we remove a directed edge. In the second and third processes (in the middle), we also
   have an equivalent with respectively two and three synchronization points. In these cases, the number of linear extensions is
   not preserved but the ``nature'' of the order is respected: the interleavings of the initial atomic actions are the same.
   The only non-transformable structure,
    let's say the one ``truly'' BIT-free is the rightmost process. Even if we introduce 
    synchronization points (we need at least three of them), the structure would not become BIT-decomposable. In terms of order
    theory such a structure is called a \emph{Crown} poset. In~\cite{posets:height2} it is shown
     that the counting problem is already $\sharp$-P complete for partial orders of height 2, hence directed bipartite digraphs similar to the structures of Fig.~\ref{fig:crowns}. One might wonder if this is still the case when these
     structures cannot occur, especially in the case of BIT-decomposable processes.
      This is for us a very interesting (and open) problem.

\section{Experimental study}
\label{sec:bench}

\begin{table}[ht]
\begin{tabular}{c|c|c|c|c}
\textbf{Algorithm} & \textbf{Class} & \textbf{Count.} & \textbf{Unif. Rand. Gen.} & \textbf{Reference} \\
\hline
\textsc{FJ} & Fork-join & $O(n)$ & $O(n\cdot \sqrt{n})$ on average & \cite{parco:csr2017} \\
\textsc{Arch} & Arch-processes & $O(n^2)$ & $O(n^4)$ worst case & \cite{parco-arch-aofa18}/Theorem~\ref{theo:random_gen_promise} \\
\textsc{bit} & BIT-decomposable & ? & ? & Theorem~\ref{thm:bitc:integral} \\
\textsc{cftp}\footnotemark{} & All processes & --  & $O(n^3\cdot log~n)$ expected & \cite{huber:dm06} \\
\end{tabular}
\vspace{8pt}
\caption{\label{tab:algos} Summary of counting and uniform random sampling algorithms (time complexity figures with $n$: number of atomic actions).}
\end{table}

\footnotetext{The \textsc{cftp} algorithm is the only one we did not design, but only implement. Its complexity is $O(n^3\cdot log~n)$ (randomized) expected time.}

In this section, we put into use the various algorihms for counting and generating process executions uniformly at random. Table~\ref{tab:algos} summarizes these algorithms and the associated worst-case time complexity bounds (when known). We implemented all the algorithms in Python 3, and we did not optimize for efficiency, hence the numbers we obtain only give a rough idea of their performances. For the sake of reproducibility, the whole experimental setting is available in the companion repository, with explanations about the required dependencies and usage. The computer we used to perform the benchmark is a standard laptop PC with an I7-8550U CPU, 8Gb RAM running Manjaro Linux. As an initial experiment, the example of Fig.~\ref{fig:example} is BIT-decomposable, so we can apply the \textsc{bit} and \textsc{cftp} algorithms. The counting (of its $1975974$ possible executions) takes about 0.3s and it takes about 9 millisecond to uniformly generate an execution with the \textsc{bit} sampler, and about 0.2s with \textsc{cftp}. For ``small'' state spaces, we observe that \textsc{bit} is always faster than \textsc{cftp}.

\begin{table}[htb]
\begin{center}
\begin{tabular}{c|c|rr|rr|r}
  \textbf{FJ size} & \textbf{$\sharp\LE$} & \textsc{FJ} \textbf{gen} & \textbf{(count)} & \textsc{bit} \textbf{gen} & \textbf{(count)} & \textsc{cftp} \textbf{gen} \\ 
\hline
10 & 19 & 0.00001 s & (0.0002 s) & 0.0006 s & (0.03 s) & 0.04 s\\
30 & $10^{9}$ & 0.00002 s & (0.0002 s) & 0.02 s & (0.03 s) & 1.8 s \\
40 & $6\cdot 10^{6}$ & 0.00004 s & (0.0003 s) & 3.5 s & (5.2 s) & 5.6 s \\ 
63 & $4\cdot 10^{29}$ & 0.0005 s & (0.03 s) & Mem. crash & (Crash) & 55 s \\
217028 & $2\cdot 10^{292431}$ &  8.11 s & (3.34 s) & Mem. crash & (Crash) & Timeout \\
\end{tabular}

\vspace{2ex}

\begin{tabular}{c|c|rr|rr|r}
\textbf{Arch size} & \textbf{$\sharp\LE$} & \textsc{Arch} \textbf{gen} & \textbf{(count)} & \textsc{bit} \textbf{gen} & \textbf{(count)} & \textsc{cftp} \textbf{gen} \\ 
\hline
10:2 & 43 & 0.00002 s & (0.00004 s) & 0.002 s & (0.000006 s) & 0.04 s\\
30:2 & $9.8\cdot 10^{8}$ & 0.003 s & (0.0009 s) & 0.000007 s & (0.0004 s) & 1.5 s\\
30:4 & $6.9\cdot 10^{10}$ & 0.001 s & (0.005 s) & 0.000007 s & (0.004 s) & 2.5 s\\
100:2 & $1.3\cdot 10^{32}$ & 0.75 s & (0.16 s) & Mem. crash & (Crash) & $\;^6\;$ 5.6 s \\
100:32 & $1\cdot 10^{53}$ & 2.7 s & (0.17 s) & Mem. crash & (Crash) & $\;^6\;$ 5.9 s \\
200:66 & $10^{130}$ &  54 s & (31 s) & Mem. crash & (Crash) & Timeout \\
\end{tabular}
\end{center}

\vspace{1ex}

\caption{\label{tab:bench} Benchmark results for BIT-decomposable classes: FJ and Arch.}
\end{table}

\footnotetext{For arch-processes of size $100$ with $2$ arches or $32$,
 the \textsc{cftp} algorithm timeouts (30s) for almost all of the input graphs.}

For a more thorough comparison of the various algorithms, we generated random processes
(uniformly at random among all processes of the same size) in the classes of fork-join (FJ) and arch-processes as discussed in Section~\ref{sec:subclasses}, using our own Arbogen tool\footnote{Arbogen is uniform random generation for context-free grammar structures: cf.~\url{https://github.com/fredokun/arbogen}.}
or an ad hoc algorithm for arch-processes (presented in the companion repository). 
For the fork-join structures, the size is simply the number of atomic actions in the process.
It is not a surprise that the dedicated algorithms we developed in~\cite{parco:csr2017} outperforms
the other algorithms by a large margin. In a few second it can handle extremely large state spaces,
which is due to the large ``branching factor'' of the process ``forks''.
The arch-processes represent a more complex structure, thus the numbers are less ``impressive'' than in the FJ case.
To generate the arch-processes (uniformly at random), we used the number of atomic actions
as well as the number of spawned promises as main parameters.
Hence an arch of size `$n$:$k$' has $n$ atomic actions and $k$ spawned promises.
Our dedicated algorithm for arch-process is also rather effective, considering the state-space sizes it can handle.
In less than a minute it can generate an execution path uniformly at random for a process of size 200 with 66 spawned promises,
the state-space is in the order of $10^{130}$.
Also, we observe that in all our tests the observable ``complexity'' is well below $O(n^4)$. The reason is that we perform the pre-computations
 (corresponding to the worst case) in a just-in-time (JIT) manner, and in practice we only actually need a small fractions of the computed values. However the random sampler is much more efficient with the separate precomputation. As an illustration, for arch-processes of size $100$ with $32$ arches,
the sampler becomes about 500 times faster. However the memory requirement for the precomputation grows very quickly, so that the JIT variant is clearly preferable.

In both the FJ and arch-process cases the current implementation of the \textsc{bit} algorithms is not entirely satisfying. One reason is that the strategy we employ for the BIT-decomposition is quite ``oblivious'' to the actual structure of the DAG. As an example, this strategy handles fork-joins far better than arch-processes. In comparison, the \textsc{cftp} algorithm is less sensitive to the structure, it performs quite uniformly on the whole benchmark. We are still confident that by handling the integral computation natively, the \textsc{bit} algorithms could handle much larger state-spaces. For now, they are only usable up-to a size of about 40 nodes (already corresponding to a rather large state space).

\section{Conclusion and future work}

The process calculus presented in this paper is quite limited in
terms of expressivity. In fact, as the paper makes clear it can
only be used to describe (intransitive) directed acyclic graphs!
However we still believe it is an interesting ``core synchronization calculus'',
providing the minimum set of features so that processes are
isomorphic to the whole combinatorial class of partially ordered
sets. Of course, to become of any practical use, the barrier synchronization
calculus should be complemented with e.g. non-deterministic choice (as we investigate
in~\cite{BGP13}). Moreover, the extension of our approach to iterative processes
remains full of largely open questions. 

Another interest of the proposed language is that it can be used to define process (hence poset) sub-classes
in an inductive way. We give two illustrations in the paper with the \emph{fork-join} processes and \emph{promises}.
This is complementary to definitions wrt. some combinatorial properties, such as the
``BIT-decomposable'' vs. ``BIT-free'' sub-classes. The class of arch-processes (that we study in~\cite{parco-arch-aofa18} and
generalize in the present paper) is also interesting: it is a combinatorially-defined sub-class of the inductively-defined
asynchronous processes with promises. We see as quite enlightening the meeting of these two distinct points of view.

Even for the ``simple'' barrier synchronizations, our study
is far from being finished because we are, in a way, also looking for ``negative'' results. 
The counting problem is hard, which is of course tightly related to the infamous ``combinatorial explosion'' phenomenon in concurrency. We in fact believe that the problem remains intractable for the class of BIT-decomposable processes, but this is still an open question that we intend to investigate furthermore.  By delimiting more precisely the ``hardness'' frontier, we hope to find more interesting sub-classes for which we can develop efficient counting and random sampling algorithms.

\vspace*{1ex}

\noindent \textbf{Acknowledgment}
We thank the anonymous reviewers as well as our ``shepard'' for helping us making the paper better and hopefully with less errors.

\bibliographystyle{plain}
\bibliography{parco}

\begin{thebibliography}{10}

\bibitem{trace:randgen:mfcs15}
Samy Abbes and Jean Mairesse.
\newblock Uniform generation in trace monoids.
\newblock In {\em MFCS 2015}, volume 9234 of {\em LNCS}, pages 63--75.
  Springer, 2015.

\bibitem{BaMaWa18}
Cyril Banderier, Philippe Marchal, and Michael Wallner.
\newblock {Rectangular Young tableaux with local decreases and the density
  method for uniform random generation (short version)}.
\newblock In {\em { GASCom 2018}}, Athens, Greece, June 2018.

\bibitem{automata:network:randge:concur17}
Nicolas Basset, Jean Mairesse, and Mich{\`{e}}le Soria.
\newblock Uniform sampling for networks of automata.
\newblock In {\em Concur 2017}, volume~85 of {\em LIPIcs}, pages 36:1--36:16.
  Schloss Dagstuhl, 2017.

\bibitem{BGP13}
O.~Bodini, A.~Genitrini, and F.~Peschanski.
\newblock The combinatorics of non-determinism.
\newblock In {\em FSTTCS'13}, volume~24 of {\em LIPIcs}, pages 425--436.
  Schloss Dagstuhl, 2013.

\bibitem{BGP16}
O.~Bodini, A.~Genitrini, and F.~Peschanski.
\newblock {A Quantitative Study of Pure Parallel Processes}.
\newblock {\em Electronic Journal of Combinatorics}, 23(1):P1.11, 39 pages,
  2016.

\bibitem{parco:csr2017}
Olivier Bodini, Matthieu Dien, Antoine Genitrini, and Fr{\'{e}}d{\'{e}}ric
  Peschanski.
\newblock Entropic uniform sampling of linear extensions in series-parallel
  posets.
\newblock In {\em CSR 2017}, volume 10304 of {\em LNCS}, pages 71--84.
  Springer, 2017.

\bibitem{parco-arch-aofa18}
Olivier Bodini, Matthieu Dien, Antoine Genitrini, and Alfredo Viola.
\newblock Beyond series-parallel concurrent systems: The case of arch
  processes.
\newblock In {\em Analysis of Algorithms, AofA 2018}, volume 110 of {\em
  LIPIcs}, pages 14:1--14:14, 2018.

\bibitem{BW91}
G.~Brightwell and P.~Winkler.
\newblock Counting linear extensions is {\#}{P}-complete.
\newblock In {\em STOC}, pages 175--181, 1991.

\bibitem{posets:height2}
Samuel {Dittmer} and Igor {Pak}.
\newblock {Counting linear extensions of restricted posets}.
\newblock {\em arXiv e-prints}, page arXiv:1802.06312, February 2018.

\bibitem{FZC94}
P.~Flajolet, P.~Zimmermann, and B.~Van Cutsem.
\newblock A calculus for the random generation of labelled combinatorial
  structures.
\newblock {\em Theor. Comput. Sci.}, 132(2):1--35, 1994.

\bibitem{valiant:bsp94}
A.~V. Gerbessiotis and L.~G. Valiant.
\newblock Direct bulk-synchronous parallel algorithms.
\newblock {\em J. Parallel Distrib. Comput.}, 22(2):251--267, 1994.

\bibitem{MCMC:modelcheck:tacas05}
R.~Grosu and S.~A. Smolka.
\newblock Monte carlo model checking.
\newblock In {\em TACAS'05}, volume 3440 of {\em LNCS}, pages 271--286.
  Springer, 2005.

\bibitem{barrier:synchro:88}
D.~Hensgen, R.~A. Finkel, and U.~Manber.
\newblock Two algorithms for barrier synchronization.
\newblock {\em International Journal of Parallel Programming}, 17(1):1--17,
  1988.

\bibitem{huber:dm06}
M.~Huber.
\newblock Fast perfect sampling from linear extensions.
\newblock {\em Discrete Mathematics}, 306(4):420--428, 2006.

\bibitem{LiskovS88}
B.~Liskov and L.~Shrira.
\newblock Promises: Linguistic support for efficient asynchronous procedure
  calls in distributed systems.
\newblock In {\em PLDI'88}, pages 260--267. {ACM}, 1988.

\bibitem{unif:mcmc:fase11}
Johan Oudinet, Alain Denise, Marie{-}Claude Gaudel, Richard Lassaigne, and
  Sylvain Peyronnet.
\newblock Uniform monte-carlo model checking.
\newblock In {\em {FASE} 2011}, volume 6603 of {\em LNCS}. Springer, 2011.

\bibitem{algo:order:88}
Ivan Rival, editor.
\newblock {\em Algorithms and Order}.
\newblock NATO Science Series. Springer, 1988.

\bibitem{Sen:RandomTestingConcurrency:2007}
Koushik Sen.
\newblock Effective random testing of concurrent programs.
\newblock In {\em Automated Software Engineering ASE'07}, pages 323--332. ACM,
  2007.

\bibitem{stanley86}
R.~P. Stanley.
\newblock Two poset polytopes.
\newblock {\em Discrete {\&} Computational Geometry}, 1:9--23, 1986.

\end{thebibliography}

\end{document}